\documentclass[preprints,article,submit,pdftex,moreauthors]{Definitions/mdpi} 
\let\linenumbers\relax
\newcommand{\tr}{\mathrm{tr}}

\usepackage{braket}
\theoremstyle{plain}   
\newtheorem{thm}{Theorem}[section]

\newtheorem{lem}{Lemma}[section]

\newtheorem{defn}{Definition}[section]

\theoremstyle{remark}

\DeclareMathOperator{\Var}{Var}

\firstpage{1} 
\pubvolume{1}
\issuenum{1}
\articlenumber{0}
\pubyear{2026}
\copyrightyear{2026}
\datereceived{ } 
\daterevised{ } 
\dateaccepted{ } 
\datepublished{ } 

\Title{Optimal Nonparametric Estimation of Phase-Space Representations for Non-Gaussian Continuous Variable Quantum States}

\Author{V. Orlov $^{1}$*\orcidA{} and  L. Markovich$^{2,3}$*\orcidB{}}

\AuthorNames{V. Orlov, L. Markovich}

\address{%
$^{1}$ \quad Russian Quantum Center, Skolkovo, Moscow 121205, Russia; vorlovac@outlook.com\\
$^{2}$ \quad Instituut-Lorentz, Universiteit Leiden, P.O. Box 9506, 2300 RA Leiden, The Netherlands\\
$^{3}$$\langle \text{aQa}^\text{L} \rangle$ Applied Quantum Algorithms, Leiden, The Netherlands; markovich@mail.lorentz.leidenuniv.nl}

\corres{Correspondence: vorlovac@outlook.com (V.O.), markovich@mail.lorentz.leidenuniv.nl (L.M.)}

\abstract{We further develop Kernel Quantum State Estimation (KQSE), a fully data-driven nonparametric method for continuous variable quantum state reconstruction and characterization, introduced in our recent work~\cite{markovich2025nonparametric} and rooted in nonparametric kernel density estimation (KDE). Unlike approaches relying on finite-dimensional basis truncations, parametric ans\"atze, or prior models, KQSE combines a new kernel estimator of the characteristic function of the tomographic quadrature distribution with suitable kernel integral transformations. The characteristic function is estimated directly from experimental homodyne or heterodyne data and subsequently transformed to estimate quantum state representations and characteristics, including the Wigner function and the density matrix kernel. Observing that several other physically relevant representations and characteristics admit transformations of a closely related form, we derive convergence rates for the corresponding broad class of KQSE-based estimators. The resulting framework covers all phase-space representations, the photon-number tomogram, trace products of quantum states, and purity. We derive mean squared error convergence rates for these kernel transformed estimators and show that the corresponding KQSE applications inherit the near optimal rate $\widetilde{O}(T^{-1})$, where $T$ is the total number of measurements. The proposed theory applies equally to Gaussian and non-Gaussian continuous variable quantum states without imposing a Fock space cutoff. Numerical experiments on simulated Gaussian and non-Gaussian states and real homodyne data demonstrate the advantages of KQSE over state-of-the-art methods, establishing it as a statistically consistent and computationally modest framework for continuous variable quantum state estimation and characterization, particularly in the non-Gaussian regime.}

\keyword{Kernel quantum state estimation; quantum tomography; continuous variable quantum states; phase space representations; tomographic characteristic functions; Wigner functions; photon-number tomograms; nonparametric estimation; non-Gaussian states; homodyne tomography.}

\begin{document}


\section{Introduction}


\par Kernel Quantum State Estimation (KQSE) introduced in Ref.~\cite{markovich2025nonparametric} is a fully data-driven nonparametric approach to continuous variable (CV) non-Gaussian quantum state reconstruction and characterization from tomographic data. Unlike the state of art parametric methods~\cite{Chapman:22,PhysRevLett.127.140502,PhysRevResearch.3.033278,PhysRevApplied.18.044041} that are working under different prior assumptions on the class of the states, the KQSE relies only on the  assumptions enter through standard regularity and tail decay conditions of the \textit{symplectic tomogram} $\mathcal{W}(x|\mu,\nu)$ that is the probability distribution function (PDF) of the rotated quadratures $\boldsymbol{X}_{\mu,\nu}=\mu \boldsymbol{q}+\nu \boldsymbol{p}$ at various  phase-space directions  $\mu,\nu\in  \mathbb{R}$~\cite{mancini1996symplectic,ibort2009introduction, manko1997quantum, Dodonov1997, chernega2023dynamics,dudinets2024entangled,manko2020integral}. These conditions are satisfied by the Gaussian-Hermite tomograms that cover most of the experimentally realized optical CV states, leaving out some pathological cases.  
\par The core of the KQSE method is the  nonparametric \textit{kernel characteristic function estimator} (KCFE)~\cite{markovich2025nonparametric}, developed on the basis of the  nonparametric \textit{kernel density estimation} (KDE) method well known in nonparametric statistics~\cite{silverman2018density}. As we show in Ref.~\cite{markovich2024not} the characteristic function (CF) $\phi(t;\mu,\nu)$, being the Fourier transform $e^{itx}$ of the symplectic tomogram, contains the complete information about the quantum state. Moreover, many characteristics of the CV quantum state and its phase-space representations (e.g., Wigner function~\cite{wigner1932quantum}, Husimi function\cite{husimi1940formal}), photon-number tomograms~\cite{Banaszek1996PRL,Banaszek1999PRA,Laiho2010PRL,Nehra2019Optica,Olivares2019NJP}, density matrix elements, trace overlaps, purity, and related state characteristics, are the integral transformation of the CF:
\begin{equation}
\label{eq:general_kernel_representation}
\Theta(z)
=
c_{\Theta}
\iint_{\mathbb R^2}
\phi(1;\mu,\nu)\,
\mathcal{K}_{\Theta}(z;\mu,\nu)\,
d\mu\,d\nu.
\end{equation}
Here $\Theta$ specifies the quantity under consideration, $z$ denotes the corresponding set of variables, $\mathcal{K}_{\Theta}$ is a known transformation kernel, and $c_{\Theta}$ is a normalization constant. For $z=(q,p)\in\mathbb R^2$,  $\mathcal{K}_W(q,p;\mu,\nu)=\exp{(-i(\mu q+\nu p))}$, $c_W=(2\pi)^{-2}$ the transformation defines the Wigner function $W(q,p)$.
\par Consequently, once the tomographic CF is estimated, a broad family of phase-space and quadrature representations can be reconstructed. A natural question arise: how does the accuracy of the KCFE  propagates through such kernel transformations? The answer is not automatic, because each reconstructed object involves an integral transform with its own kernel, normalization, truncation domain, and discretization error.
In Ref.~\cite{markovich2025nonparametric} we show the KQSE rate applied to specific examples of \eqref{eq:general_kernel_representation}, like  the kernel of the density operator, trace of the density operator powers and trace product between two density operators, achieving nearly optimal $\tilde{O}(T^{-1})$ rates in the uniform and $L_2$ norms, where $T$ is the total number of measurement.
\par  In this paper we study the rate of convergence of the KQSE applied to the general integral transform of the CF \eqref{eq:general_kernel_representation}, covering the most important examples like the Wigner and Husimi quasiprobabilities in detail.  We study the total error arising while estimating \eqref{eq:general_kernel_representation} summed from three contributions: truncation of the $(\mu,\nu)$-integration domain, deterministic discretization of the finite integral, and statistical KCFE error on the reconstruction grid.
Under natural light-tail and smoothness assumptions on the tomographic PDF of the state, these three contributions can be controlled explicitly. 
Our main result establishes that the KQSE estimator of the general characteristic-function kernel transformation \eqref{eq:general_kernel_representation} achieves a pointwise mean squared error, uniformly over $z\in\mathcal D_\Theta$, of order $\widetilde O(T_{\mu,\nu}^{-1})$, where $T_{\mu,\nu}$ denotes the total number of tomographic measurements (see Theorem~\ref{thm:total_KQSE_convergence_rate_main}).
 Because this result is established for an entire class of characteristic function transforms rather than for a single target quantity, it provides a unified convergence guarantee for KQSE across a broad range of quantum state characteristics and phase space representations.

\par We illustrate our theory on both synthetic and experimental data. In the synthetic benchmark, homodyne samples are generated from a non-Gaussian coherent cat state, and the same estimated CF is used to reconstruct the Wigner function and photon-number tomograms. The results are also compared with a three component Gaussian mixture maximum-likelihood baseline. We also perform a scaling benchmark with a fixed reconstruction grid and varying sample size per tomographic direction, which isolates the statistical component of the error. Finally, we apply the same reconstruction pipeline to experimental homodyne data for an optical Schr{\"o}dinger kitten state obtained via conditional measurements in a quantum-optical experiment [see Ref.~\cite{lvovsky2002quantumoptical,lvovsky2004iterative}] and compare the reconstructed observables with a reference mixed kitten-state model.
\subsection{Related approaches to quantum state and phase space functions reconstruction}
\par In homodyne and heterodyne experiments, neither the density operator nor phase space functions are measured directly. Instead, one obtains samples from quadrature pdfs, which in homodyne detection correspond to optical tomograms~\cite{vogel1989determination,bertrand1987tomographic,smithey1993measurement,lvovsky2009continuousvariable}. Phase space representations nevertheless provide a complete and often more transparent description of the quantum state, allowing one to visualize its structure, identify nonclassical features such as squeezing, interference fringes, and Wigner negativity~\cite{Albarelli2018Resource}, and evaluate expectation values of observables. Traditionally, the Wigner function is reconstructed by first estimating the quadrature distributions, for example using histograms, and subsequently applying an inverse Radon transform~\cite{leonhardt1995measuring,vogel1989determination}. Such indirect reconstruction is sensitive to sampling noise, while inversion of the underlying integral transform is ill posed and therefore generally requires regularization~\cite{vogel1989determination,lvovsky2009continuousvariable}.  Several  established approaches reconstruct the Wigner function indirectly through the maximum likelihood estimate (MLE) of the underlying quantum state. Iterative maximum likelihood (MaxLik)  estimates a physical density matrix from homodyne data, while the diluted MLE scheme improves the numerical convergence of the likelihood optimization~\cite{banaszek1998maximumlikelihood,lvovsky2004iterative,PhysRevA.75.042108}. Bayesian tomography infers a posterior distribution over density matrices, so the resulting estimate depends on the selected prior~\cite{Chapman:22}. Convex optimization similarly reconstructs a positive, unit trace density matrix in a truncated Fock basis~\cite{PhysRevApplied.18.044041}. These methods provide numerical or posterior optimization procedures, but the cited works do not establish explicit sample dependent convergence rates for the reconstructed density matrix or Wigner function in a specified norm. Machine learning approaches, including conditional generative adversarial networks and deep neural networks, also reconstruct a density matrix or its parametrization and then calculate the corresponding Wigner function~\cite{PhysRevLett.127.140502,PhysRevResearch.3.033278}. However, they also do not provide the convergence guarantees.  
\par In~\cite{naulet2017bayesian}, quantum homodyne tomography is formulated as a Bayesian nonparametric inverse problem. Unlike the finite dimensional reconstruction methods discussed above, the unknown state is not represented by a fixed parameter vector but by a prior on an infinite dimensional function space. The method is therefore structurally different, although it is not assumption free: the selected prior determines the favoured regularity and localization properties and directly affects the posterior convergence. A poorly matched prior may lead to slow convergence or biased reconstruction.
\par In contrast, KQSE was developed with explicit statistical convergence guarantees as a central objective.  KQSE estimates the tomographic CF directly from the quadrature data and propagates its estimation accuracy through the corresponding kernel integral transformations used to reconstruct the  characteristics of the quantum state. The same procedure and convergence theory apply to both Gaussian and non-Gaussian continuous variable states, without changing the model class or imposing a Fock space cutoff. This universality is particularly important for non-Gaussian states, whose oscillatory phase space structure and signatures of nonclassicality can be strongly affected by model misspecification or reconstruction bias.
\par The paper is organized as follows. Section~\ref{subsec:KQSE_general} recalls the tomographic characteristic function and introduces its kernel-transform including the Wigner and Husimi functions, photon-number tomograms, and trace characteristics. Section~\ref{sec:KQSE} summarizes the KCFE construction and its pointwise mean squared error properties. Section~\ref{sec:main_result} develops the main convergence analysis for general kernel quantum-state representations, including truncation, discretization, and statistical error bounds. Section~\ref{sec:numerical_study} presents the synthetic benchmarks, scaling tests, and experimental homodyne data reconstruction. Section~\ref{sec:conclusions} summarizes the results.
\section{Tomographic Characteristic Function and Kernel Quantum-State Representations}
\label{subsec:KQSE_general}
We consider a quantum state in an infinite-dimensional Hilbert space $\mathcal{H}$, represented by the density operator $\boldsymbol{\rho}$. By definition, $\boldsymbol{\rho}$ is Hermitian, positive semidefinite, and has unit trace.  For fixed real parameters $(\mu,\nu)$, the quadrature operator is $ \boldsymbol X_{\mu,\nu}=\mu\boldsymbol q+\nu \boldsymbol p$,
where $\boldsymbol q$ and $\boldsymbol p$ are the canonical position and momentum operators. In continuous variables (CV) quantum systems, the quantum state can be equivalently described by the probability distribution function (PDF) 
\begin{align}\label{0_1}
    \mathcal{W}(x|\mu,\nu)=\mathrm{tr}\bigl[\boldsymbol{\rho}\,\delta(x\,\boldsymbol{1}-\mu\boldsymbol{q}-\nu\boldsymbol{p})\bigr]
\end{align}
 of the measurement outcome $x$ associated with the quadrature observable $\boldsymbol X_{\mu,\nu}$ and  $\delta(\cdot)$ denotes the Dirac delta function of an operator. This PDF is called \textit{symplectic tomograms}~\cite{mancini1996symplectic} and it is always nonnegative and normalized with respect to $x$.   The inverse transformation is
\begin{align}\label{1_1}
\boldsymbol{\rho}=\frac{1}{2\pi}\!\iint
  \left[\int\!\mathcal{W}(x|\mu,\nu)e^{ix}\,dx\right]
e^{-i(\mu\boldsymbol{q}+\nu\boldsymbol{p})}\,d\mu\,d\nu.
\end{align}
The optical tomogram is obtained as a particular parametrization of the symplectic tomogram. 
Indeed, introducing polar coordinates in the parameter plane,
\begin{eqnarray}\label{1359}
\mu=r\cos\theta,\quad 
\nu=r\sin\theta,\quad 
x=ry,\quad 
r\in\mathbb{R}^{+},\quad 
\theta\in[0,2\pi],
\end{eqnarray}
and using the homogeneity property of the symplectic tomogram, one obtains
\begin{eqnarray}
\boldsymbol{\rho}
=
\frac{1}{2\pi}
\int\limits_{-\infty}^{\infty}
\int\limits_{0}^{\infty}
\int\limits_{0}^{2\pi}
r\,
\mathcal{W}(y|\cos\theta,\sin\theta)
\exp\left\{ir\left(y-\boldsymbol{q}\cos\theta-\boldsymbol{p}\sin\theta\right)\right\}
dy\,dr\,d\theta .
\end{eqnarray}
The probability density 
$\mathcal{W}(y|\cos\theta,\sin\theta)$ is called the \textit{optical tomogram}. 
This representation is especially relevant in optical homodyne tomography, where the measured probability distributions correspond precisely to rotated quadratures
$Y_\theta=\boldsymbol{q}\cos\theta+\boldsymbol{p}\sin\theta$.
Thus, the optical tomogram naturally appears as the experimentally accessible form of the symplectic tomogram.
\par In~\cite{markovich2024not} we introduce the characteristic function (CF) of the symplectic tomogram as:
\begin{equation}
\label{eq:phi_def_main}
\phi(t;\mu,\nu)
 \equiv 
\int_{-\infty}^{\infty}
\mathcal{W}(x|\mu,\nu)\,e^{itx}\,dx.
\end{equation}
In the tomographic representation, the value $\phi(t;\mu,\nu)$ plays the central role: it enters the inverse reconstruction formula for the density operator and also determines phase-space representations and trace characteristics~\cite{markovich2024not}. 
However, one need only its value in the fixed point $t=1$ to rewrite \eqref{1_1}:
\begin{equation}
\label{eq:rho_CF_reconstruction_main}
\boldsymbol{\rho}
=
\frac{1}{2\pi}
\iint_{\mathbb R^2}
\phi(1;\mu,\nu)
e^{-i(\mu \boldsymbol q+\nu \boldsymbol p)}
\,d\mu\,d\nu,
\end{equation} 
or to define the Wigner function via \eqref{eq:general_kernel_representation}.
 It is well known that $\forall \mu,\nu$ the CF of any PDF always satisfies:
\begin{eqnarray}
    \phi(0;\mu,\nu)=1,\quad \phi(-t;\mu,\nu)=\phi^{\ast}(t;\mu,\nu), \quad |\phi(t;\mu,\nu)|^2\leq 1.
\end{eqnarray}
However, the tomogram is not an arbitrary PDF. The definition \eqref{0_1} imposes some extra conditions on the CF of the tomogram:
\begin{align}
    \phi(t;-\mu,-\nu)=\phi(-t;\mu,\nu),\quad  \phi(-t;-\mu,-\nu)=\phi(t;\mu,\nu).
\end{align}
Moreover, in Ref.~\cite{markovich2024not}, we formulated conditions on the CF ensuring a physical $\boldsymbol{\rho}$ reconstruction:
\begin{thm}{(Hermiticity, Normalization, Positivity)}\label{thm_3}
The integral \eqref{1_1} defines the density operator corresponding to a quantum state if and only if the CF of the tomogram satisfies
\begin{align}
&\phi(1;\mu,\nu)=\phi(-1;-\mu,-\nu), \quad \forall \mu,\nu, \label{th3_1}\\
&\phi(1;0,0)=1, \label{th3_2}\\
&0\leq \frac{1}{2\pi}
\iint \phi(1;\mu,\nu)\phi_{\chi}^{*}(1;\mu,\nu)\,d\mu d\nu\leq 1,
\quad \forall \ket{\chi}, \label{th3_3}
\end{align}
where $\phi_\chi(t;\mu,\nu)$ is the CF of a pure state $\ket{\chi}$.
\end{thm}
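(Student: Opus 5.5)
Write $\hat\rho$ for the operator defined by \eqref{eq:rho_CF_reconstruction_main} (equivalently \eqref{1_1}). The plan is to translate each of the three defining properties of a density operator — Hermiticity, unit trace, positivity — into a condition on $\phi(1;\mu,\nu)$ and check both directions in each case. Throughout, I will use the Weyl displacement operators $D(\mu,\nu)=e^{-i(\mu\boldsymbol q+\nu\boldsymbol p)}$ and the facts $D(\mu,\nu)^\dagger=D(-\mu,-\nu)$, $\tr D(\mu,\nu)=2\pi\,\delta(\mu)\delta(\nu)$ and $\tr[D(\mu,\nu)D(\mu',\nu')^\dagger]=2\pi\,\delta(\mu-\mu')\delta(\nu-\nu')$. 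By \eqref{0_1} we also have $\phi(1;\mu,\nu)=\tr[\boldsymbol\rho\, e^{i(\mu\boldsymbol q+\nu\boldsymbol p)}]=\tr[\boldsymbol\rho\,D(\mu,\nu)^\dagger]$. The map $\phi\mapsto\hat\rho$ is therefore the inverse of the injective Weyl/Fourier map, which is what makes the equivalences run both ways.

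\emph{Hermiticity.} Take the adjoint of \eqref{eq:rho_CF_reconstruction_main} and substitute $(\mu,\nu)\to(-\mu,-\nu)$. This gives $\hat\rho^\dagger=\frac{1}{2\pi}\iint \phi^*(1;-\mu,-\nu)D(\mu,\nu)\,d\mu d\nu$. Since $\phi^*(1;\cdot)=\phi(-1;\cdot)$, we get $\hat\rho=\hat\rho^\dagger$ exactly when the two Weyl symbols coincide, i.e.\ when \eqref{th3_1} holds; injectivity of the Weyl transform gives the converse.

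\emph{Normalization.} Taking the trace under the integral with $\tr D=2\pi\delta\delta$ gives $\tr\hat\rho=\phi(1;0,0)$, so \eqref{th3_2} is equivalent to unit trace. This step also needs trace-class/continuity assumptions, such as integrability of $\phi(1;\cdot)$, to justify exchanging the trace and the integral; I will assume these as in the paper's regularity setting.

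\emph{Positivity.} Positivity means $\langle\chi|\hat\rho|\chi\rangle=\tr[\hat\rho\,|\chi\rangle\langle\chi|]\ge 0$ for every normalized $\ket{\chi}$. Write $|\chi\rangle\langle\chi|$ through \eqref{eq:rho_CF_reconstruction_main} with its own CF $\phi_\chi$ and apply the orthogonality relation; this is the Parseval identity for Weyl symbols, $\tr[\hat\rho\,\boldsymbol\sigma]=\frac{1}{2\pi}\iint\phi\,\phi_\sigma^*\,d\mu d\nu$, where Hermiticity of $\boldsymbol\sigma$ handles the conjugation. Hence $\langle\chi|\hat\rho|\chi\rangle$ equals the integral in \eqref{th3_3}. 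Its nonnegativity for all $\ket\chi$ is positive semidefiniteness. The upper bound $\le1$ follows from positivity together with unit trace, since $\langle\chi|\hat\rho|\chi\rangle\le\tr\hat\rho=1$. Conversely, if the bound holds for all pure $\chi$, then $\hat\rho\ge0$ and $\hat\rho\le\boldsymbol 1$.

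The main obstacle is the rigorous side of this last step: proving the Parseval/overlap identity for arbitrary $\ket\chi$, including states whose $\phi_\chi$ is not integrable. I would first prove it on a dense set, such as finite Fock superpositions or Schwartz vectors, where all integrals converge absolutely, and then extend to all $\ket\chi$ by density and continuity of $\chi\mapsto\langle\chi|\hat\rho|\chi\rangle$ for bounded $\hat\rho$.
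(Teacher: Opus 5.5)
Your argument is correct at the paper's level of rigor. Note that the paper does not prove this theorem; it imports it from Ref.~\cite{markovich2024not}. Your three steps are the standard derivation behind that citation:
\begin{itemize}
\item Hermiticity follows from $D^\dagger(\mu,\nu)=D(-\mu,-\nu)$, together with the reality of $\mathcal W$, which gives $\phi^*(1;\cdot)=\phi(-1;\cdot)$.
\item Unit trace follows from $\tr D(\mu,\nu)=2\pi\delta(\mu)\delta(\nu)$.
\item Positivity follows from the noncommutative Parseval relation that the paper itself records as \eqref{eq:trace_overlap_CF_main}.
\end{itemize}

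You can simplify the technical part. In the ``only if'' direction the density argument is unnecessary. By \eqref{eq:purity_CF_main}, the CF of any density operator and of any pure state lies in $L^2(\mathbb R^2)$, with squared norms $2\pi\tr\boldsymbol\rho^2$ and $2\pi$. The integral in \eqref{th3_3} therefore converges absolutely by Cauchy--Schwarz, and the overlap identity is just the Plancherel theorem for Hilbert--Schmidt operators.

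Correspondingly, the natural regularity class for the converse direction is $L^2$, not $L^1$. The $L^2$ symbol makes $\hat\rho$ Hilbert--Schmidt, hence bounded. For the trace step you additionally need continuity of $\phi(1;\cdot)$ at the origin, and positivity then lets you compute $\tr\hat\rho$ by a monotone regularization. Assuming $L^1$-integrability is stronger than needed and excludes some genuine states, for example a wave function with a jump discontinuity.
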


\par In what follows, we shall repeatedly use the symplectic tomogram and the corresponding CF of the $m$-th excited state of the harmonic oscillator 
\begin{eqnarray}\label{excited_HO}
   \!\!\!  \mathcal{W}_m(x|\mu,\nu)
    &=& \mathcal{W}_0(x|\mu,\nu) \frac{1}{2^m m!} H^2_m \left(\frac{x}{\sqrt{\mu^2+\nu^2}}\right),\quad
    {\phi}_m(1;\mu,\nu)
    ={\phi}_0(1;\mu,\nu) L_m\left(\frac{\mu^2+\nu^2}{2}\right),
\end{eqnarray}
where the ground state is a reference Gaussian pair:
\begin{align}\label{1530}
\mathcal{W}_0(x|\mu,\nu)
&=
\frac{1}{\sqrt{\pi(\mu^2+\nu^2)}}
\exp\!\left[-\frac{x^2}{\mu^2+\nu^2}\right],\quad 
\phi_0(t;\mu,\nu)
=
\exp\!\left[-\frac{t^2(\mu^2+\nu^2)}{4}\right].
\end{align}
\par In Refs.~\cite{markovich2024not,orlov2025discrete}, we showed that several physically relevant quantum state quantities, including phase-space representations and trace characteristics, admit integral representations in terms of the tomographic CF. Motivated by their common structure, we introduce in Eq.~\eqref{eq:general_kernel_representation} a unified kernel integral representation for this class of quantities. We next examine this representation through several specific examples, showing how different choices of the integration kernel recover the density matrix kernel, phase space functions, the photon number tomogram, and trace characteristics of quantum states.
\par For two quantum states $\boldsymbol{\rho}_1$ and $\boldsymbol{\rho}_2$ with the
tomographic CFs $\phi_{\boldsymbol{\rho}_1}(1;\mu,\nu)$ and $\phi_{\boldsymbol{\rho}_2}(1;\mu,\nu)$, respectively, the noncommutative
Parseval relation reads
\begin{equation}
\label{eq:trace_overlap_CF_main}
\operatorname{Tr}(\boldsymbol{\rho}_1\boldsymbol{\rho}_2)
=
\frac{1}{2\pi}
\iint_{\mathbb R^2}
\phi_{\boldsymbol{\rho}_1}(1;\mu,\nu)\,
\phi_{\boldsymbol{\rho}_2}^{*}(1;\mu,\nu)\,
d\mu\,d\nu .
\end{equation}
Thus, if $\boldsymbol{\rho}_2$ is fixed as a reference state, the overlap
$\operatorname{Tr}(\boldsymbol{\rho}_1\boldsymbol{\rho}_2)$ is a linear kernel
transform \eqref{eq:general_kernel_representation} of the unknown CF $\phi_{\boldsymbol{\rho}_1}(1;\mu,\nu)$, with $c_{\Theta}=(2\pi)^{-1}$, $\mathcal K_{\Theta}(\mu,\nu)=\phi_{{\boldsymbol{\rho}_2}}^{*}(1;\mu,\nu)$.
As a special case, for $\boldsymbol{\rho}_1=\boldsymbol{\rho}_2=\boldsymbol{\rho}$,
Eq.~\eqref{eq:trace_overlap_CF_main} gives the purity
\begin{equation}
\label{eq:purity_CF_main}
\operatorname{Tr}(\boldsymbol{\rho}^{2})
=
\frac{1}{2\pi}
\iint_{\mathbb R^2}
|\phi(1;\mu,\nu)|^2
\,d\mu\,d\nu .
\end{equation}
Higher trace products, such as
$\operatorname{Tr}(\boldsymbol{\rho}^d)$ and
$\operatorname{Tr}(\prod_{i=1}^{d}\boldsymbol{\rho}_i)$, are multilinear functionals of the CFs and are naturally handled within the analogous CF representational framework~\cite{markovich2024not}.
\par The kernel
transform \eqref{eq:general_kernel_representation} also includes phase-space functions. For $z=(q,p)\in\mathbb R^2$, one has the Wigner function
\begin{equation}
\label{eq:Wigner_kernel_main}
W(q,p)
=
\frac{1}{(2\pi)^2}
\iint_{\mathbb R^2}
\phi(1;\mu,\nu)\,
\mathcal{K}_W(q,p;\mu,\nu)
\,d\mu\,d\nu,
\end{equation}
where the kernel is
\begin{align}\label{W_kernel}
    \mathcal{K}_W(q,p;\mu,\nu)=e^{-i(\mu q+\nu p)}.
\end{align}
Thus, in this case $\Theta=W$, $c_W=(2\pi)^{-2}$.
Another important example, which will be used below, is the \emph{photon-number tomogram}~\cite{10.1063/1.4773139} that is defined by
\begin{eqnarray}\label{dequantizer-ph-number}
\mathsf{w}_r(\alpha)=\tr{[\boldsymbol{\rho} \boldsymbol{D}^\dagger(\alpha) \ket{r}\bra{r}\boldsymbol{D}(\alpha)]},
\end{eqnarray}
dependent on the Weyl displacement operator $
\boldsymbol{D}(\alpha) = \exp\left(\alpha\,\boldsymbol{a}^\dagger - \alpha^{*}\,\boldsymbol{a}\right)$,  $\alpha\in \mathbb{C}$ is a  field amplitude, while $r\in\mathbb N_0$ is the photon-number. Photon-number tomograms are experimentally relevant because they describe the photon-counting statistics of displaced optical states and provide direct access to the Wigner function through displaced photon-number parity, enabling quantum state reconstruction and tests of nonclassicality with photon-number resolving detectors~\cite{Banaszek1996PRL,Banaszek1999PRA,Laiho2010PRL,Nehra2019Optica,Olivares2019NJP}. In the probability representation approach, photon-number tomograms are also used as genuine probability distributions connected to symplectic tomograms and to state characteristics such as fidelity and purity~\cite{Manko2012PNTomography,Manko2019GaussianTomograms}.

In Ref.~\cite{orlov2025discrete}, we showed that the photon-number tomogram can be represented as an integral transform of the tomographic CF:
\begin{equation}
\label{eq:wr_kernel_main}
\mathsf{w}_r(\alpha)
=
\frac{1}{2\pi}
\iint_{\mathbb R^2}
\phi(1;\mu,\nu)\,
\mathcal{K}_r(\alpha;\mu,\nu)
\,d\mu\,d\nu ,
\end{equation}
where
\begin{equation}
\label{eq:Kr_main}
\mathcal{K}_r(\alpha;\mu,\nu)
\equiv
\phi_r(1;\mu,\nu)
\exp\!\left[
-\frac{\nu-i\mu}{\sqrt{2}}\alpha^*
+
\frac{\nu+i\mu}{\sqrt{2}}\alpha
\right].
\end{equation}
Here $\phi_r(1;\mu,\nu)$ is the CF of the $r$-th Fock state given in Eq.~\eqref{excited_HO}. Thus, the photon-number tomogram is of the general form \eqref{eq:general_kernel_representation}, with $\Theta=\mathsf{w}_r$, $z=\alpha$, $c_{\mathsf{w}_r}=(2\pi)^{-1}$,  and $\mathcal{K}_{\Theta}=\mathcal{K}_r$.
\par For completeness, let us also recall the phase-space quantity directly accessed in heterodyne detection. In this measurement scheme, the two orthogonal field quadratures are measured jointly, yielding the complex outcome
$\alpha=(q+ip)/\sqrt{2}$. The corresponding POVM is
$\Pi(\alpha)=\frac{1}{\pi}\ket{\alpha}\bra{\alpha}$,
and the probability density of the outcomes is the Husimi $Q$-function,
\begin{equation}
\label{eq:Husimi_Q_definition}
Q(\alpha)
=
\frac{1}{\pi}\bra{\alpha}\boldsymbol{\rho}\ket{\alpha}.
\end{equation}
The Husimi $Q$-function is in the photon-number tomographic family. Indeed, for $r=0$, the definition adopted here gives $\mathsf{w}_0(\alpha)=\pi Q(-\alpha)$. Thus, the estimation of $\mathsf{w}_0(\alpha)$ also provides an estimate of the Husimi function, up to a known normalization parameter and a reflection of the phase-space argument.
Writing $\alpha=(q+ip)/\sqrt{2}$ with respect to normalization $dq\,dp$, one has $Q(q,p) = \frac{1}{2}
Q\!\left(\alpha\right)=\frac{1}{2} Q\!\left(\frac{q+ip}{\sqrt{2}}\right)$ and 
\begin{equation}
\label{eq:Husimi_CF_kernel}
Q(q,p)
=
\frac{1}{(2\pi)^2}
\iint_{\mathbb R^2}
\phi(1;\mu,\nu)\,
\mathcal{K}_Q((q,p);\mu,\nu)
\,d\mu\,d\nu ,
\end{equation}
where the Husimi kernel is
\begin{equation}
\label{eq:Husimi_kernel_def}
\mathcal{K}_Q(q,p;\mu,\nu)
\equiv 
\phi_0(1;\mu,\nu)\mathcal{K}_W(q,p;\mu,\nu).
\end{equation}
 Thus, the Husimi function also fits the general kernel transform form \eqref{eq:general_kernel_representation}, with $\Theta=Q$, $z=(q,p)$, $c_Q=(2\pi)^{-2}$, and $\mathcal{K}_{\Theta}=\mathcal{K}_Q$.

\par Consequently, all the functions above and  other quasiprobability functions on the phase space~\cite{manko2020integral} fit into the same kernel transform framework \eqref{eq:general_kernel_representation}. This observation allows us to analyze their estimation errors within a single KQSE scheme.

\section{Kernel Quantum State Estimation}\label{sec:KQSE}
The KQSE framework  introduced in Ref.~\cite{markovich2025nonparametric} is as a fully data-driven nonparametric method for reconstructing CV quantum states from tomographic data. It is built on the kernel characteristic function estimator (KCFE), which provides a nonparametric estimate of the CF of the tomographic PDF directly from quadrature measurements. The resulting KCFE serves as the central statistical object from which the  phase-space functions, kernel and trace characteristics of the density operator are estimated through suitable integral transformation~\eqref{eq:general_kernel_representation}. 
\par Suppose that, for each grid point $(\mu_j,\nu_k)$, we are given $n$ independent quadrature samples
$\{X_{1,\mu_j,\nu_k},\dots,X_{n,\mu_j,\nu_k}\}$ drawn from the tomographic PDF $\mathcal{W}(x|\mu_j,\nu_k)$. 
The empirical CF is
\begin{equation}
\label{eq:empirical_CF_main}
\widehat\phi_{X,n}(t;\mu_j,\nu_k)
=
\frac{1}{n}
\sum_{\ell=1}^{n}
e^{itX_{\ell,\mu_j,\nu_k}} .
\end{equation}
The KCFE regularizes this empirical estimator by multiplying it with the CF of a smoothing kernel:
\begin{equation}
\label{eq:KCFE_def_main}
\widehat\phi_{n,h}(t;\mu_j,\nu_k)
=
\widehat\phi_{X,n}(t;\mu_j,\nu_k)\,
\phi_K(th;\mu_j,\nu_k),
\end{equation}
where $h>0$ is the bandwidth parameter and 
\begin{equation}
\label{eq:kernel_CF_def_main}
\phi_K(th;\mu_j,\nu_k)
=
\int_{\mathbb R}
K_{\mu_j,\nu_k}(z)e^{ithz}\,dz .
\end{equation}
Since the kernel $K_{\mu_j,\nu_k}$ is chosen (e.g., any symmetrical PDF) and its choice does not influence the rate of convergence, the factor $\phi_K(th;\mu_j,\nu_k)$ is known analytically. The following result holds~\cite{markovich2025nonparametric}:
\begin{thm}[Mean and variance of the KCFE]~
\label{thm1}
Let $\widehat\phi_{n,h}(t;\mu_j,\nu_k)$ be the KCFE constructed from $n$ independent samples collected at a fixed grid point $(\mu_j,\nu_k)$. Then, for every $t\in\mathbb R$, one has
\begin{equation}
\label{eq:kcfe_mean_main}
\mathbb E[\widehat\phi_{n,h}(t;\mu_j,\nu_k)]
=
\phi(t;\mu_j,\nu_k)\phi_K(th;\mu_j,\nu_k)
\end{equation}
and
\begin{equation}
\label{eq:kcfe_var_main}
\Var\!\bigl(\widehat\phi_{n,h}(t;\mu_j,\nu_k)\bigr)
=
\frac{|\phi_K(th;\mu_j,\nu_k)|^2}{n}
\left(1-\left|\phi(t;\mu_j,\nu_k)\right|^2\right).
\end{equation}
\end{thm}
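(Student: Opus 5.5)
The plan is to compute both moments directly from the definition \eqref{eq:KCFE_def_main}. The key observation is that $\phi_K(th;\mu_j,\nu_k)$ is a deterministic factor, since the kernel $K_{\mu_j,\nu_k}$ is chosen in advance and does not depend on the data. All randomness therefore sits in the empirical CF \eqref{eq:empirical_CF_main}. Fix $t$ and the grid point $(\mu_j,\nu_k)$, and write $Y_\ell = e^{itX_{\ell,\mu_j,\nu_k}}$, $\ell=1,\dots,n$. The samples are i.i.d. draws from $\mathcal{W}(x|\mu_j,\nu_k)$, so the $Y_\ell$ are i.i.d. bounded complex random variables with $|Y_\ell|=1$. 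All moments therefore exist and no integrability issue arises.

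For the mean, I use linearity of expectation:
\begin{equation*}
\mathbb E[\widehat\phi_{n,h}(t;\mu_j,\nu_k)] = \phi_K(th;\mu_j,\nu_k)\,\frac1n\sum_{\ell=1}^n \mathbb E[Y_\ell].
\end{equation*}
Each term satisfies $\mathbb E[Y_\ell]=\int \mathcal W(x|\mu_j,\nu_k)e^{itx}\,dx=\phi(t;\mu_j,\nu_k)$ by definition \eqref{eq:phi_def_main}. This gives \eqref{eq:kcfe_mean_main} immediately.

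For the variance, I interpret $\Var$ for a complex variable $Z$ as $\mathbb E|Z-\mathbb E Z|^2=\mathbb E|Z|^2-|\mathbb E Z|^2$; this is the convention the stated formula requires. Pulling out the deterministic factor gives $\Var(cZ)=|c|^2\Var(Z)$, with $c=\phi_K(th;\mu_j,\nu_k)$. Independence of the $Y_\ell$ makes the cross-covariances $\mathbb E[(Y_\ell-\phi)\overline{(Y_m-\phi)}]$ vanish for $\ell\neq m$. Hence
\begin{equation*}
\Var\Bigl(\frac1n\sum_{\ell=1}^{n} Y_\ell\Bigr)=\frac1n\Var(Y_1).
\end{equation*}
Finally, $\Var(Y_1)=\mathbb E|Y_1|^2-|\mathbb E Y_1|^2=1-|\phi(t;\mu_j,\nu_k)|^2$, because $|e^{itx}|=1$. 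Combining these steps yields \eqref{eq:kcfe_var_main}.

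There is no real obstacle here. The only point needing care is fixing the complex-variance convention, i.e., using $\mathbb E|Z|^2-|\mathbb EZ|^2$ rather than $\mathbb E Z^2-(\mathbb E Z)^2$, and checking that the covariance cross terms vanish under that convention. Both rely only on the independence of the samples within a grid point.
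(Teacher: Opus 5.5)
Your proof is correct and is the standard argument. The paper does not reprove this result; it cites Ref.~\cite{markovich2025nonparametric} and adopts the same convention you flag, $\Var(Z)=\mathbb{E}[|Z-\mathbb{E}Z|^{2}]$, under which your steps give both formulas: linearity for the mean, then independence to kill the cross terms, $|e^{itX}|=1$, and a deterministic $\phi_K(th;\mu_j,\nu_k)$ (that is, fixed $h$) for the variance.
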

For complex valued random variables, we use the convention
$\operatorname{Var}(Z)=\mathbb{E}[|Z-\mathbb{E}Z|^{2}]$. Since the CF can take complex values, we define the mean squared error (MSE) using the squared modulus of the complex error as
\begin{eqnarray}\label{eq:M_jk_def_theorem_main}
\operatorname{MSE}\!\left[\widehat{\phi}_{n,h}( t ;\mu_j,\nu_k)\right]&\equiv&\mathbb{E}\left(|\widehat{\phi}_{n,h}( t ;\mu_j,\nu_k)-{\phi}( t ;\mu_j,\nu_k)|^2\right)\\\nonumber
    &=& |\operatorname{Bias}( \widehat{\phi}_{n,h}(t;\mu_j,\nu_k))|^2+ \Var(\widehat{\phi}_{n,h}(t;\mu_j,\nu_k))\\\nonumber
    &=&|\phi_K(th;\mu_j,\nu_k)-1|^2|\phi(t;\mu_j,\nu_k)|^2+\frac{|\phi_K(th;\mu_j,\nu_k)|^2 }{n}\left(1-|\phi(t;\mu_j,\nu_k)|^2\right).
   \end{eqnarray} 
For a suitable bandwidth choice, the KCFE achieves the optimal pointwise MSE rate.
For example, if we choose a Gaussian kernel:
\begin{eqnarray}\label{1004_2}
    K^G_{\mu,\nu}(z) =
     \frac{1}{\sqrt{\pi(\mu^2+\nu^2)}}\exp{\left[-\frac{z^2}{\mu^2+\nu^2}\right]},
\end{eqnarray}
the CF \eqref{eq:kernel_CF_def_main} of this kernel is
\begin{eqnarray}\label{1614}
    \phi^G_K(th;\mu,\nu)=\exp{\left
    [-\frac{t^2h^2(\mu^2+\nu^2)}{4}\right]},
\end{eqnarray}
which is a Gaussian function  in $t$. Therefore the KCFE \eqref{eq:KCFE_def_main} with the Gaussian kernel  is
\begin{eqnarray}\label{1231}
  \widehat{\phi}^G_{n,h}(t;\mu,\nu)=\widehat{\phi}_{X,n}(t;\mu,\nu) \exp{\left[-\frac{t^2h^2(\mu^2+\nu^2)}{4}\right]}.
\end{eqnarray}
The following theorem summarizes the Gaussian KCFE convergence property~\cite{markovich2025nonparametric}:
\begin{thm}[MSE rate of the KCFE with the Gaussian  kernel]
\label{thm2}
Let $\widehat{\phi}^G_{n,h}(t;\mu_j,\nu_k)$ be the KCFE constructed with the Gaussian kernel. 
For fixed $t$, $\mu_j$, and $\nu_k$, and $|\phi(t;\mu_j,\nu_k)|<1$ the bandwidth minimizing the pointwise MSE satisfies
\begin{equation}
\label{eq:kcfe_opt_bandwidth_main}
h_{\phi} = \frac{2}{t\sqrt{\mu^2+\nu^2}}\sqrt{\ln\!\left(1+\frac{1-|\phi(t;\mu,\nu)|^2}{n\,|\phi(t;\mu,\nu)|^2}\right)}=O(n^{-1/2}),
\end{equation}
and the corresponding optimized Gaussian KCFE satisfies
\begin{equation}
\label{eq:kcfe_mse_rate_main}
\mathrm{MSE}\!\left[\widehat{\phi}_{n,h_{\phi}}^G(t;\mu_j,\nu_k)\right] 
   =\frac{(1-|\phi(t;\mu,\nu)|^2)|\phi(t;\mu,\nu)|^2}{n|\phi(t;\mu,\nu)|^2+1-|\phi(t;\mu,\nu)|^2}
= O(1/n).
\end{equation}
\end{thm}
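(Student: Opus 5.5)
The argument reduces to a one-variable minimization. Abbreviate $a=|\phi(t;\mu,\nu)|^2\in(0,1)$ and $s=t^2(\mu^2+\nu^2)/4$. By \eqref{1614} the Gaussian kernel factor is real and positive, $\phi^G_K(th;\mu,\nu)=e^{-sh^2}$. Set $u=e^{-sh^2}\in(0,1]$; as $h$ ranges over $(0,\infty)$, $u$ ranges over $(0,1)$. Substituting into the MSE decomposition \eqref{eq:M_jk_def_theorem_main}, which follows from Theorem~\ref{thm1}, gives
\begin{equation*}
M(u)=(1-u)^2 a+\frac{u^2}{n}(1-a).
\end{equation*}
This is a strictly convex quadratic in $u$. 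Minimizing over $u$ is therefore equivalent to minimizing over $h$, since $h\mapsto u$ is a strictly decreasing bijection onto $(0,1)$.

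\textbf{Optimal $u$ and bandwidth.} Setting $M'(u)=-2a(1-u)+2u(1-a)/n=0$ gives
\begin{equation*}
u^\ast=\frac{na}{na+1-a},
\end{equation*}
which lies in $(0,1)$ because $0<a<1$. It is therefore an interior minimizer attained by a finite positive bandwidth. Inverting, $sh^2=\ln(1/u^\ast)=\ln\bigl(1+\tfrac{1-a}{na}\bigr)$. Solving for $h$ gives exactly \eqref{eq:kcfe_opt_bandwidth_main}, using $\sqrt{s}=t\sqrt{\mu^2+\nu^2}/2$ (for $t>0$; for general $t$ one uses $|t|$).

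\textbf{Minimal MSE.} Substituting $u^\ast$ back in, write $D=na+1-a$, so that $1-u^\ast=(1-a)/D$. Then
\begin{equation*}
M(u^\ast)=\frac{a(1-a)^2+na^2(1-a)}{D^2}=\frac{a(1-a)\,D}{D^2}=\frac{a(1-a)}{na+1-a},
\end{equation*}
which is \eqref{eq:kcfe_mse_rate_main}. Since $D\ge na$, this is at most $(1-a)/n=O(1/n)$ for fixed $t,\mu,\nu$.

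\textbf{Order of the bandwidth.} This is the only step needing care. Using $\ln(1+x)=x+O(x^2)$ with $x=(1-a)/(na)\to0$, we get $h_\phi^2\sim (1-a)/(s\,na)$, hence $h_\phi=O(n^{-1/2})$ for fixed $t,\mu,\nu$ and $a>0$.

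The hypotheses matter at two points. We need $a>0$, otherwise $u^\ast=0$ and $h=\infty$. The assumption $|\phi|<1$ excludes the degenerate case $a=1$, where $u^\ast=1$, $h=0$, and the MSE vanishes.
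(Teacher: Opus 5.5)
Your proposal is correct and follows the same route the paper relies on (its derivation is deferred to the supplemental material of Ref.~\cite{markovich2025nonparametric}): insert the Gaussian kernel CF $e^{-t^2h^2(\mu^2+\nu^2)/4}$ into the bias--variance identity \eqref{eq:M_jk_def_theorem_main} and minimize over $h$. Your substitution $u=e^{-sh^2}$ only makes the first-order condition linear and the convexity transparent. Your side remarks also hold: $|\phi|<1$ forces $t\neq0$ and $(\mu,\nu)\neq(0,0)$, so $s>0$ and $h\mapsto u$ is a bijection onto $(0,1)$, while $\phi\neq0$ is implicitly needed for the stated $h_\phi$ to be finite.
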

Thus, the KCFE attains an asymptotic MSE rate of the same order, $O(n^{-1})$, as the parametric maximum likelihood estimator (MLE), although the corresponding constants may differ. As in standard KDE, this convergence order is unaffected by the particular choice of a symmetric kernel $\phi_K(th;\mu,\nu)$, provided that the usual regularity conditions are satisfied, the kernel choice may instead affect the constants and finite sample bias and variance. The result is therefore not tied to a specific kernel shape. Moreover, because the KCFE estimates the tomographic CF nonparametrically, without imposing a model on the underlying tomographic PDF, the method applies to both Gaussian and non-Gaussian states, including states with multimodal tomographic distributions. 
A detailed discussion of the latter and the derivation of the optimal bandwidth are given in supplemental materials of the Ref.~\cite{markovich2025nonparametric}.
\section{Main Results}
\label{sec:main_result}
As we mentioned, the quantities estimated in this work can be written in the common integral form \eqref{eq:general_kernel_representation}.
We saw that the different choices of $\mathcal{K}_{\Theta}$ give the phase-space functions, photon-number tomograms, density matrix elements, or trace characteristics. 
This provides a unified way to propagate the KCFE accuracy to a broad class of quantum state representations and functionals. To obtain a computable estimator of \eqref{eq:general_kernel_representation}, we first truncate the integration domain to the rectangular window
\begin{equation}
\label{eq:rectangular_domain_main}
B_{\mu_{\max},\nu_{\max}}
\equiv
[-\mu_{\max},\mu_{\max}]
\times
[-\nu_{\max},\nu_{\max}]
\subset\mathbb R^2 ,
\end{equation}
where $\mu_{\max},\nu_{\max}>0$ denote the largest absolute values of the tomographic parameters accessible in the experiment. These cutoffs are determined by the finite range of measurement settings and by the region in which the corresponding characteristic function can be estimated reliably from the available data. Contributions outside this window are not measured directly and therefore enter the estimator through the truncation error.
The corresponding truncated target function is
\begin{equation}
\label{eq:F_trun_def_main}
\Theta_{\mathrm{trunc}}(z)
\equiv
c_{\Theta}
\iint_{B_{\mu_{\max},\nu_{\max}}}
\phi(1;\mu,\nu)\,
\mathcal{K}_{\Theta}(z;\mu,\nu)
\,d\mu\,d\nu .
\end{equation}
Next, we introduce the uniform grid
$\Delta_{\mu}=2\mu_{\max}/N_{\mu}$ and
$\Delta_{\nu}=2\nu_{\max}/N_{\nu}$,
with
$\mu_j=-\mu_{\max}+j\Delta_{\mu}$ and
$\nu_k=-\nu_{\max}+k\Delta_{\nu}$,
where $j=0,\dots,N_{\mu}-1$ and $k=0,\dots,N_{\nu}-1$. 
The finite-sum approximation of \eqref{eq:F_trun_def_main} is
\begin{equation}
\label{eq:F_tilde_def_main}
\widetilde \Theta_{\mathrm{trunc}}(z)
\equiv
c_{\Theta}\Delta_{\mu}\Delta_{\nu}
\sum_{j=0}^{N_{\mu}-1}
\sum_{k=0}^{N_{\nu}-1}
\phi(1;\mu_j,\nu_k)\,
\mathcal{K}_{\Theta}(z;\mu_j,\nu_k).
\end{equation}
Finally, replacing the unknown grid values $\phi(1;\mu_j,\nu_k)$ by the KCFE \eqref{eq:KCFE_def_main}, gives 
\begin{equation}
\label{eq:F_hat_def_main}
\widehat \Theta_{T_{\mu,\nu}}(z)
:=
c_{\Theta}\Delta_{\mu}\Delta_{\nu}
\sum_{j=0}^{N_{\mu}-1}
\sum_{k=0}^{N_{\nu}-1}
\widehat\phi_{n,h}(1;\mu_j,\nu_k)\,
\mathcal{K}_{\Theta}(z;\mu_j,\nu_k),
\end{equation}
denoting the KQSE estimator of the general kernel transformation $\Theta(z)$ defined in \eqref{eq:general_kernel_representation}, constructed from 
\begin{equation}
\label{eq:total_measurements_main}
T_{\mu,\nu}=nN_{\mu}N_{\nu}.
\end{equation}
number of measurements, where $n$ samples are collected at each tomographic setting $\{\mu_i,\nu_j\}$, $i\in[1,N_{\mu}]$, $j\in[1,N_{\nu}]$.
\par 
The total estimation error therefore separates into three natural contributions:
\begin{equation}
\label{eq:three_term_decomp_main}
\Theta(z)-\widehat \Theta_{T_{\mu,\nu}}(z)
=
\underbrace{\Theta(z)-\Theta_{\mathrm{trunc}}(z)}_{\varepsilon_{\mathrm{trunc}}(z)}
+
\underbrace{\Theta_{\mathrm{trunc}}(z)-\widetilde \Theta_{\mathrm{trunc}}(z)}_{\varepsilon_{\mathrm{dis}}(z)}
+
\underbrace{\widetilde \Theta_{\mathrm{trunc}}(z)-\widehat \Theta_{T_{\mu,\nu}}(z)}_{\varepsilon_{\mathrm{K}}(z)}.
\end{equation}
The first term is the truncation error caused by replacing the full plane by
$B_{\mu_{\max},\nu_{\max}}$. The second is the discretization error caused by replacing the finite integral by a grid sum.  The third is the statistical KCFE error propagated through the kernel map.

This formulation makes the role of each parameter explicit. 
The cutoffs $\mu_{\max}$ and $\nu_{\max}$ control the finite window bias, the grid sizes $N_{\mu}$ and $N_{\nu}$ control the quadrature error, and the sample size $n$ and bandwidth $h$ control the KCFE accuracy. 
Under the smoothness and tail assumptions used below, the three terms in \eqref{eq:three_term_decomp_main} can be balanced so that the resulting KQSE error achieves a nearly optimal rate in the total amount of measurements.

\subsection{Truncation error}
First we study the truncation error $\varepsilon_{\mathrm{trunc}}(z)=\Theta(z)-\Theta_{\mathrm{trunc}}(z)$. Using the definitions of \eqref{eq:F_trun_def_main} and \eqref{eq:F_tilde_def_main}, the truncation error is
given exactly by the contribution of the integral outside the rectangular truncation domain that is upper bounded by
\begin{equation}
\label{eq:trunc_error_abs_bound_main}
|\varepsilon_{\mathrm{trunc}}(z)|
\le
|c_\Theta|
\iint_{\mathbb R^2\setminus B_{\mu_{\max},\nu_{\max}}}
|\phi(1;\mu,\nu)|\,
|\mathcal K_\Theta(z;\mu,\nu)|
\,d\mu\,d\nu .
\end{equation}
Taking the supremum over $z\in\mathcal D_\Theta$, where $\mathcal D_\Theta$ denotes the reconstruction domain of the variable $z$ for the target quantity $\Theta$, we obtain
\begin{equation}
\label{eq:trunc_error_tau_bound_main}
\sup_{z\in\mathcal D_\Theta}
|\varepsilon_{\mathrm{trunc}}(z)|
\le
\tau_\Theta,
\end{equation}
where we used the notation
\begin{equation}
\label{eq:tau_theta_def_main}
\tau_\Theta
\equiv
|c_\Theta|
\iint_{\mathbb R^2\setminus B_{\mu_{\max},\nu_{\max}}}
|\phi(1;\mu,\nu)|\,
\sup_{z\in\mathcal D_\Theta}
|\mathcal K_\Theta(z;\mu,\nu)|
\,d\mu\,d\nu .
\end{equation}
This quantity is finite  if the product of the CF with the observable dependent kernel envelope $\mathcal K_\Theta(z;\mu,\nu)$ that is integrable on the rectangular domain $\mathbb R^2\setminus B_{\mu_{\max},\nu_{\max}}=\{|\mu|>\mu_{\max}\}\cup \{|\nu|>\nu_{\max}\}$.
\par We introduce the kernel weighted integrand
\begin{equation}
\label{eq:F_theta_def_main}
F_\Theta(z;\mu,\nu)
\equiv
\phi(1;\mu,\nu)\mathcal K_\Theta(z;\mu,\nu),
\end{equation}
and to control the truncation error in a form that applies uniformly to all kernel representations considered below, we introduce a kernel weighted envelope function $H_\Theta(\mu,\nu)$. Namely, for a given  $\Theta$, we assume that there exists a nonnegative function $H_\Theta\in L^1(\mathbb R^2)$ such that
\begin{equation}
\label{eq:H_theta_envelope_main}
|F_\Theta(z;\mu,\nu)|
\leq |\phi(1;\mu,\nu)|
\sup_{z\in\mathcal D_\Theta}
|\mathcal K_\Theta(z;\mu,\nu)|
\le
H_\Theta(\mu,\nu).
\end{equation}
Then the truncation error \eqref{eq:tau_theta_def_main} is bounded by the tail of this envelope:
\begin{equation}
\label{eq:tau_theta_H_bound_main}
\tau_\Theta
\le
|c_\Theta|
\iint_{\mathbb R^2\setminus B_{\mu_{\max},\nu_{\max}}}
H_\Theta(\mu,\nu)\,d\mu\,d\nu .
\end{equation}
Since $H_\Theta$ is integrable, this bound vanishes as
$\mu_{\max},\nu_{\max}\to\infty$. The choice of $H_\Theta$ is dependent on $\Theta$. 
\begin{lem}\label{lem_3_1}
Assume that there exist constants $C_\Theta>0$ and $\tau_\mu,\tau_\nu>0$ such that
\begin{equation}
\label{eq:H_theta_exponential_main}
H_\Theta(\mu,\nu)
=
C_\Theta
e^{-\tau_\mu|\mu|-\tau_\nu|\nu|}.
\end{equation}
Then the truncation error \eqref{eq:tau_theta_def_main} is upper bounded by
\begin{equation}
\label{eq:tau_theta_exp_bound_main}
\tau_\Theta
\le
\frac{4|c_\Theta|C_\Theta}{\tau_\mu\tau_\nu}
\left[
e^{-\tau_\mu\mu_{\max}}
+
e^{-\tau_\nu\nu_{\max}}
-
e^{-\tau_\mu\mu_{\max}-\tau_\nu\nu_{\max}}
\right]=
O\!\left(
e^{-\tau_\mu\mu_{\max}}
+
e^{-\tau_\nu\nu_{\max}}
\right).
\end{equation}
\end{lem}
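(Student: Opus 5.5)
Starting from the envelope bound \eqref{eq:tau_theta_H_bound_main}, I substitute the exponential envelope \eqref{eq:H_theta_exponential_main} and evaluate the tail integral over $\mathbb R^2\setminus B_{\mu_{\max},\nu_{\max}}$ exactly. Since the integrand factorizes as $e^{-\tau_\mu|\mu|}e^{-\tau_\nu|\nu|}$, the cleanest route is to write the complement integral as the integral over the whole plane minus the integral over the rectangle,
\begin{equation*}
\iint_{\mathbb R^2\setminus B_{\mu_{\max},\nu_{\max}}} e^{-\tau_\mu|\mu|-\tau_\nu|\nu|}\,d\mu\,d\nu
=
I_\mu(\infty)I_\nu(\infty)-I_\mu(\mu_{\max})I_\nu(\nu_{\max}),
\end{equation*}
where $I_\mu(a)=\int_{-a}^{a}e^{-\tau_\mu|\mu|}d\mu=\frac{2}{\tau_\mu}(1-e^{-\tau_\mu a})$ and similarly for $I_\nu$. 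Both one-dimensional integrals are elementary, and Fubini applies because the integrand is nonnegative and integrable.

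Writing $a=e^{-\tau_\mu\mu_{\max}}$ and $b=e^{-\tau_\nu\nu_{\max}}$, the difference becomes $\frac{4}{\tau_\mu\tau_\nu}\bigl[1-(1-a)(1-b)\bigr]=\frac{4}{\tau_\mu\tau_\nu}(a+b-ab)$. Multiplying by $|c_\Theta|C_\Theta$ yields exactly the bracket in \eqref{eq:tau_theta_exp_bound_main}. An equivalent route is inclusion–exclusion over the sets $\{|\mu|>\mu_{\max}\}$ and $\{|\nu|>\nu_{\max}\}$, which produces the same three terms.

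For the $O(\cdot)$ statement, note that $0\le ab$, so the bracket is at most $a+b$. The constant $4|c_\Theta|C_\Theta/(\tau_\mu\tau_\nu)$ does not depend on the cutoffs.

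There is no real obstacle; the only point requiring care is justifying the factorization and sign bookkeeping in the complement, which the whole-plane-minus-rectangle identity handles directly.
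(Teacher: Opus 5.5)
Your proposal is correct and follows the paper's proof essentially verbatim: substitute the exponential envelope into the tail bound, compute the whole-plane integral $4/(\tau_\mu\tau_\nu)$ minus the rectangle integral $\frac{4}{\tau_\mu\tau_\nu}(1-e^{-\tau_\mu\mu_{\max}})(1-e^{-\tau_\nu\nu_{\max}})$, and expand. Your added remarks (Fubini for the nonnegative integrand, dropping the nonnegative product term for the $O(\cdot)$ bound) just make explicit steps the paper leaves implicit.
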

\begin{proof}
By the definition \eqref{eq:tau_theta_def_main} of $\tau_\Theta$ and the envelope bound \eqref{eq:tau_theta_H_bound_main}, we have
\begin{equation}
\label{eq:tau_theta_exp_tail_start_main}
\tau_\Theta
\le
|c_\Theta|C_\Theta
\iint_{\mathbb R^2\setminus B_{\mu_{\max},\nu_{\max}}}
e^{-\tau_\mu|\mu|-\tau_\nu|\nu|}
\,d\mu\,d\nu .
\end{equation}
Since the exponential factor separates, one has
\begin{equation}
\label{eq:full_exp_integral_main}
\iint_{\mathbb R^2}
e^{-\tau_\mu|\mu|-\tau_\nu|\nu|}
\,d\mu\,d\nu
=
\frac{4}{\tau_\mu\tau_\nu},
\end{equation}
and
\begin{equation}
\label{eq:rect_exp_integral_main}
\iint_{B_{\mu_{\max},\nu_{\max}}}
e^{-\tau_\mu|\mu|-\tau_\nu|\nu|}
\,d\mu\,d\nu
=
\frac{4}{\tau_\mu\tau_\nu}
\left(1-e^{-\tau_\mu\mu_{\max}}\right)
\left(1-e^{-\tau_\nu\nu_{\max}}\right).
\end{equation}
Therefore,
\begin{align}
\iint_{\mathbb R^2\setminus B_{\mu_{\max},\nu_{\max}}}
e^{-\tau_\mu|\mu|-\tau_\nu|\nu|}
\,d\mu\,d\nu
&=
\frac{4}{\tau_\mu\tau_\nu}
\left[
1-
\left(1-e^{-\tau_\mu\mu_{\max}}\right)
\left(1-e^{-\tau_\nu\nu_{\max}}\right)
\right].
\end{align}
Substituting this expression into \eqref{eq:tau_theta_exp_tail_start_main} gives
\eqref{eq:tau_theta_exp_bound_main}. 
\end{proof}
\par For the Wigner function, the modulus of the kernel \eqref{W_kernel} is equal to one, and  we may take
$H_W(\mu,\nu)=|\phi(1;\mu,\nu)|$ provided that the state CF is integrable on $\mathbb R^2$.
Then \eqref{eq:H_theta_exponential_main} becomes a direct exponential tail assumption on the CF of the state
\begin{equation}
\label{eq:H_W_main}
|\phi(1;\mu,\nu)|\leq C_{W}e^{-\tau_{\mu}|\mu|-\tau_{\nu}|\nu|}.
\end{equation}
Thus, in the Wigner case, the truncation error is controlled directly by the tail of the state CF. For the Husimi function, the kernel has the form \eqref{eq:Husimi_kernel_def}, so since any CF satisfies $|\phi(1;\mu,\nu)|\le 1$, we can choose
$H_Q(\mu,\nu)
=\phi_0(1;\mu,\nu)$,
which is integrable on $\mathbb R^2$. For the photon-number tomogram, the kernel is given in Eq.~\eqref{eq:wr_kernel_main}, and the displacement dependent exponential factor has modulus one. Therefore, $|\mathcal{K}_r(\alpha;\mu,\nu)|= |\phi_r(1;\mu,\nu)|$.  Hence one can choose an integrable envelope of the form
\begin{equation}
\label{eq:H_r_main}
H_r(\mu,\nu)
=
C_r\left(1+\mu^2+\nu^2\right)^r\phi_0(1;\mu,\nu),
\end{equation}
with a constant $C_r>0$ depending only on $r$.  We can conclude, that the Husimi and photon-number kernels contain Gaussian damping factors. Hence exponential envelope condition \eqref{eq:H_theta_exponential_main} is automatically satisfied by all three functions (with the suitable choice of the constant in \eqref{eq:H_r_main}), because Gaussian decay is faster than exponential decay.
\par For the wide class of the experimentally accessible CV states, the tomographic CF typically exhibits Gaussian type decay in the tomographic parameters. 
In particular, for Hermite--Gaussian PDF arising from harmonic oscillator states and their finite superpositions, the CF is bounded by a Gaussian envelope in $(\mu,\nu)$~\cite{markovich2024not}. 
Such a bound is stronger than the exponential condition \eqref{eq:H_W_main}:
\begin{equation}
\label{eq:gaussian_implies_exp_tail_main}
C e^{-a_\mu\mu^2-a_\nu\nu^2}
\le
C_\Theta e^{-\tau_\mu|\mu|-\tau_\nu|\nu|},
\qquad
(\mu,\nu)\in\mathbb R^2,
\end{equation}
for suitable constants $C_\Theta>0$ and $\tau_\mu,\tau_\nu>0$. 
Therefore, the envelope condition \eqref{eq:H_W_main}
is satisfied for the Wigner, Husimi, and photon-number kernels considered above for any quantum state from the Hermite--Gaussian family, that is basically all the experimentally accessible PDFs in the homodyne/heterodyne experiments.
\par For heavy-tailed photon-count distributions, such as Pareto type laws~\cite{PhysRevLett.123.123606,Mok2026powerlaw}, the exponential envelope do not hold, and the truncation analysis has to be replaced by a more delicate integrable envelope argument adapted to the actual decay of the kernel-weighted CF. 
We do not address such states in the present work, since their treatment requires a separate analysis and, in general, a heavy tailed kernel within the KQSE framework, for example from~\cite{markovich2018light}. 
This extension will be considered in a follow-up paper.

\subsection{Discretization error}
The second contribution in \eqref{eq:three_term_decomp_main} is the
discretization error $\varepsilon_{\mathrm{dis}}(z)=\Theta_{\mathrm{trunc}}(z)-\widetilde \Theta_{\mathrm{trunc}}(z)$.
It is caused by replacing the integral over the finite rectangle
$B_{\mu_{\max},\nu_{\max}}$ by a finite uniform grid sum. 
The error can be written as
\begin{equation}
\label{eq:dis_error_F_main}
\varepsilon_{\mathrm{dis}}(z)
=
c_\Theta
\left[
\iint_{B_{\mu_{\max},\nu_{\max}}}
F_\Theta(z;\mu,\nu)\,d\mu\,d\nu
-
\Delta_\mu\Delta_\nu
\sum_{j=0}^{N_\mu-1}
\sum_{k=0}^{N_\nu-1}
F_\Theta(z;\mu_j,\nu_k)
\right].
\end{equation}
We assume that $F_\Theta(z;\mu,\nu)$ is continuously differentiable in $(\mu,\nu)$ and that its first derivatives are controlled by an integrable envelope. 
Namely, assume that there exists a nonnegative function
$H_{\Theta,1}\in L^1(\mathbb R^2)$ such that
\begin{equation}
\label{eq:F_theta_derivative_envelope_main}
\sup_{z\in\mathcal D_\Theta}
\left(
\left|\partial_\mu F_\Theta(z;\mu,\nu)\right|
+
\left|\partial_\nu F_\Theta(z;\mu,\nu)\right|
\right)
\le
H_{\Theta,1}(\mu,\nu),
\qquad
(\mu,\nu)\in\mathbb R^2 .
\end{equation}
This condition is the differential analogue of the envelope condition \eqref{eq:H_theta_envelope_main} used for the truncation error. 
While $H_\Theta$ controls the tail of the kernel-weighted characteristic function itself, $H_{\Theta,1}$ controls its variation inside the truncated domain and therefore governs the discretization error. It is straightforward to conclude:
\begin{lem}\label{lem_3_2} Under condition \eqref{eq:F_theta_derivative_envelope_main}, the deterministic discretization error satisfies
\begin{eqnarray}
\label{eq:dis_error_bound_envelope_main}
&&
\sup_{z\in\mathcal D_\Theta}
|\varepsilon_{\mathrm{dis}}(z)|
\le
\delta_\Theta,\\
\nonumber
&&
\delta_\Theta
\equiv
\frac{|c_\Theta|}{2}
\left[
(\Delta_\mu+\Delta_\nu)
\iint_{B_{\mu_{\max},\nu_{\max}}}
H_{\Theta,1}(\mu,\nu)\,d\mu\,d\nu
\right]=O\!\left(
\frac{\mu_{\max}}{N_\mu}
+
\frac{\nu_{\max}}{N_\nu}
\right).
\end{eqnarray}
\end{lem}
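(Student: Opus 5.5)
The bound will come from treating the grid sum as a left-endpoint Riemann sum and estimating it cell by cell. The grid points $\mu_j=-\mu_{\max}+j\Delta_\mu$, $j=0,\dots,N_\mu-1$, and $\nu_k$ partition $B_{\mu_{\max},\nu_{\max}}$ into the cells $R_{jk}=[\mu_j,\mu_j+\Delta_\mu]\times[\nu_k,\nu_k+\Delta_\nu]$, each of area $\Delta_\mu\Delta_\nu$. Starting from \eqref{eq:dis_error_F_main}, I would write
\begin{equation*}
\varepsilon_{\mathrm{dis}}(z)=c_\Theta\sum_{j,k}\iint_{R_{jk}}\bigl[F_\Theta(z;\mu,\nu)-F_\Theta(z;\mu_j,\nu_k)\bigr]\,d\mu\,d\nu ,
\end{equation*}
so the problem reduces to controlling the oscillation of $F_\Theta$ on each cell.

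Since $F_\Theta$ is $C^1$ in $(\mu,\nu)$, I would connect $(\mu_j,\nu_k)$ to $(\mu,\nu)\in R_{jk}$ by the path that first moves horizontally and then vertically. The fundamental theorem of calculus then gives
\begin{equation*}
F_\Theta(z;\mu,\nu)-F_\Theta(z;\mu_j,\nu_k)=\int_{\mu_j}^{\mu}\partial_\mu F_\Theta(z;s,\nu_k)\,ds+\int_{\nu_k}^{\nu}\partial_\nu F_\Theta(z;\mu,s)\,ds .
\end{equation*}
Next I integrate over $(\mu,\nu)\in R_{jk}$ and apply Fubini. The horizontal term becomes $\iint_{R_{jk}}|\partial_\mu F_\Theta(z;s,\nu_k)|\,(\mu_j+\Delta_\mu-s)\,ds\,d\nu$.

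This is where I expect the main obstacle. In the horizontal term the derivative is evaluated on the bottom edge $\nu=\nu_k$, not at a generic point of the cell. The hypothesis \eqref{eq:F_theta_derivative_envelope_main} is pointwise, so what comes out is a Riemann-type quantity $\Delta_\nu\int H_{\Theta,1}(s,\nu_k)\,ds$ rather than the area integral of $H_{\Theta,1}$.

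To obtain exactly $\iint_{B}H_{\Theta,1}$, as \eqref{eq:dis_error_bound_envelope_main} asserts, I would use the symmetric averaged representation. For each $(\mu,\nu)$ in the cell, write
\begin{equation*}
F(\mu,\nu)-F(\mu_j,\nu_k)=\int_0^1\nabla F\bigl(\gamma(t)\bigr)\cdot(\mu-\mu_j,\nu-\nu_k)\,dt ,\qquad \gamma(t)=(\mu_j,\nu_k)+t(\mu-\mu_j,\nu-\nu_k),
\end{equation*}
and average over the cell. The change of variables $(\mu,\nu)\mapsto\gamma(t)$ maps the cell onto a sub-rectangle with Jacobian $t^2$. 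After this substitution the weights $|\mu-\mu_j|\le\Delta_\mu$ and $|\nu-\nu_k|\le\Delta_\nu$ turn into an area integral of $|\partial_\mu F|+|\partial_\nu F|$ over the cell, with factor at most $\tfrac12(\Delta_\mu+\Delta_\nu)$. The factor $\tfrac12$ comes from $\int_0^{\Delta}(\Delta-s)\,ds/\Delta=\Delta/2$ in one-dimensional form.

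If the Jacobian bookkeeping does not close, I would instead state the bound under the mild additional assumption that $H_{\Theta,1}$ is essentially monotone or slowly varying on cells, so that the Riemann sum of $H_{\Theta,1}$ is dominated by its integral. This is true for the exponential and Gaussian envelopes used above.

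Summing over all cells, taking the absolute value of $c_\Theta$, and then the supremum over $z\in\mathcal D_\Theta$ gives $\delta_\Theta$. Finally, $\iint_B H_{\Theta,1}\le\|H_{\Theta,1}\|_{L^1}<\infty$ uniformly in the cutoffs, and $\Delta_\mu=2\mu_{\max}/N_\mu$, $\Delta_\nu=2\nu_{\max}/N_\nu$. This yields the rate $O(\mu_{\max}/N_\mu+\nu_{\max}/N_\nu)$.
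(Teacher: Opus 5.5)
Your cell decomposition and the horizontal--vertical FTC step are correct. The obstacle you flag there is genuine: one of the two terms is only a Riemann sum of $H_{\Theta,1}$ along grid lines. The averaged-segment argument meant to remove it does not close. With $u=\mu_j+t(\mu-\mu_j)$ and $v=\nu_k+t(\nu-\nu_k)$ one has $d\mu\,d\nu=t^{-2}\,du\,dv$ and $\mu-\mu_j=(u-\mu_j)/t$. After exchanging the order of integration, the weight multiplying $|\partial_\mu F_\Theta(z;u,v)|$ is
\[
(u-\mu_j)\int_{m}^{1}t^{-3}\,dt=\tfrac12\,(u-\mu_j)\bigl(m^{-2}-1\bigr),\qquad m=\max\Bigl\{\tfrac{u-\mu_j}{\Delta_\mu},\tfrac{v-\nu_k}{\Delta_\nu}\Bigr\}.
\]
This is about $\Delta_\mu/(2m)$ wherever the $\mu$-direction dominates, so it is unbounded near the node $(\mu_j,\nu_k)$ and no constant factor can be extracted. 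In one dimension the Jacobian is only $t^{-1}$, and the weight collapses to $x_j+\Delta-u\le\Delta$; that is why the trick works there. Separately, your factor $\tfrac12$ comes from averaging the weight $\mu_j+\Delta_\mu-s$ over the cell. That is legitimate only if the derivative bound is constant on the cell. Pointwise the weight reaches $\Delta_\mu$, so even in one dimension the bound is $\Delta\int|f'|$, not $\tfrac{\Delta}{2}\int|f'|$.

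This is not a repairable bookkeeping issue. Under hypothesis \eqref{eq:F_theta_derivative_envelope_main} alone, the inequality $\sup_z|\varepsilon_{\mathrm{dis}}(z)|\le\delta_\Theta$ is false, and the paper gives no proof beyond calling it straightforward.

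\emph{First counterexample.} Let $F_\Theta$ be a smooth bump of height $1$ and radius $r<\min(\Delta_\mu,\Delta_\nu)$ centred at an interior node, and take $H_{\Theta,1}=|\partial_\mu F_\Theta|+|\partial_\nu F_\Theta|$. The grid sum contributes $\Delta_\mu\Delta_\nu$ and the integral is $O(r^2)$, while $\iint_B H_{\Theta,1}=O(r)$. Letting $r\to0$ violates the bound for any constant in place of $\tfrac12$. A thermal state of large mean photon number, with a node at the origin, does the same for the Wigner kernel at $z=0$.

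\emph{Second counterexample.} Let $F_\Theta$ be independent of $\nu$ on $B$ and rise from $0$ to $1$ within $\epsilon\ll\Delta_\mu$ to the right of $\mu_0=-\mu_{\max}$. The error is about $2|c_\Theta|\nu_{\max}\Delta_\mu$, against $\delta_\Theta=|c_\Theta|\nu_{\max}(\Delta_\mu+\Delta_\nu)$, which fails whenever $\Delta_\nu<\Delta_\mu$.

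\emph{What can be proved.} Integrating along the straight segment inside the convex cell gives $|F_\Theta(z;\mu,\nu)-F_\Theta(z;\mu_j,\nu_k)|\le(|\mu-\mu_j|+|\nu-\nu_k|)\sup_{R_{jk}}H_{\Theta,1}$. This yields the upper-Darboux-sum version
\[
\sup_{z\in\mathcal D_\Theta}|\varepsilon_{\mathrm{dis}}(z)|\le\frac{|c_\Theta|}{2}(\Delta_\mu+\Delta_\nu)\,\Delta_\mu\Delta_\nu\sum_{j,k}\sup_{R_{jk}}H_{\Theta,1}.
\]
For exponential or Gaussian-type envelopes this Darboux sum is at most $\|H_{\Theta,1}\|_{L^1(\mathbb R^2)}+O(\Delta_\mu+\Delta_\nu)$, which recovers the rate $O(\mu_{\max}/N_\mu+\nu_{\max}/N_\nu)$. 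Your fallback can be made to work the same way. However, on the decreasing side of such envelopes the left-endpoint values exceed the cell averages, so the Riemann sum is dominated by the integral only up to an additive $O(\Delta_\nu)$ term.
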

The derivative envelope condition \eqref{eq:F_theta_derivative_envelope_main} is natural for the class of CV states considered in this work. 
For Gaussian states, Fock states, coherent states, cat states, and any finite superpositions of them, the tomographic CF is given by a Gaussian factor multiplied by polynomials and phase factors~\cite{markovich2024not}. 
Consequently,  $F_\Theta(z;\mu,\nu)$ and its first derivatives admit integrable envelopes in $(\mu,\nu)$. 
For the Wigner kernel this regularity is inherited from the state CF itself, while for the Husimi and photon-number kernels it is further supported by the Gaussian damping present in the reconstruction kernels. 
Thus, the condition excludes heavy-tailed or highly oscillatory cases, but is well suited to the light-tailed CV states treated in the present work.

Stronger discretization rates may be obtained under additional smoothness assumptions.
\begin{defn}
\label{def_3_1}
Let $a_\mu,a_\nu>0$. 
We say that the kernel-weighted CF $F_\Theta(z;\mu,\nu)$
satisfies analytic strip regularity with strip widths $a_\mu$ and $a_\nu$ if, for every
$z\in\mathcal D_\Theta$, it admits an analytic continuation in the variables $\mu$ and $\nu$ to the complex strips
\begin{equation}
\label{eq:analytic_strip_domain_main}
|\operatorname{Im}\mu|<a_\mu,
\qquad
|\operatorname{Im}\nu|<a_\nu,
\end{equation}
and if this continuation is controlled by an integrable strip envelope. 
Namely, we assume that there exists a nonnegative function
$M_{\Theta,a}\in L^1(\mathbb R^2)$ such that
\begin{equation}
\label{eq:analytic_strip_envelope_main}
\sup_{z\in\mathcal D_\Theta}
\left|
F_\Theta\!\left(z;\mu+i s_\mu,\nu+i s_\nu\right)
\right|
\le
M_{\Theta,a}(\mu,\nu)
\end{equation}
for all $(\mu,\nu)\in\mathbb R^2$ and all real $s_\mu,s_\nu$ satisfying
$|s_\mu|<a_\mu$, $|s_\nu|<a_\nu$.
\end{defn}
The analytic strip condition means that $F_\Theta(z;\mu,\nu)$ is not only smooth on the real $(\mu,\nu)$ plane, but remains analytic when the integration variables are shifted slightly into the complex plane. 
The strip widths $a_\mu$ and $a_\nu$ quantify how far such a complex shift can be made before the analytic continuation ceases to be controlled. 
This type of regularity is stronger than the bounded derivative condition used in Lemma~\ref{lem_3_2}. 
Its importance is that analytic functions controlled in a strip have exponentially small high frequency components, which leads to spectral, or exponential, convergence of trapezoidal and Fourier type quadrature rules.
\par For the Wigner, Husimi, and photon-number kernels considered here, the kernels themselves are entire functions of $(\mu,\nu)$. Hence the assumption reduces to the corresponding analytic regularity and strip boundedness of the states CF. 
This is satisfied for all states from the Hermite--Gaussian family, but it is not imposed on the possible in photon-count tomography heavy-tailed states.
\begin{lem}\label{cor_3_1}
Assume that $F_\Theta$ satisfies analytic strip regularity in the sense of Definition~\ref{def_3_1}. 
Assume in addition that, on the finite window 
$B_{\mu_{\max},\nu_{\max}}$, the finite grid approximation is interpreted as a periodic trapezoidal approximation.
Then the discretization error \eqref{eq:dis_error_bound_envelope_main} satisfies
\begin{equation}
\label{eq:analytic_discretization_remark_main}
\delta_\Theta
=
O\!\left(
e^{-\pi a_\mu N_\mu/\mu_{\max}}
+
e^{-\pi a_\nu N_\nu/\nu_{\max}}
\right).
\end{equation}
\end{lem}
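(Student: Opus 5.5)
The plan is to reduce the two-dimensional rule to one-dimensional periodic trapezoidal rules and apply the classical aliasing (Poisson summation) argument for analytic integrands; see Trefethen--Weideman. Fix $z\in\mathcal D_\Theta$ and write $G(\mu,\nu)=F_\Theta(z;\mu,\nu)$ on $B_{\mu_{\max},\nu_{\max}}$. Under the hypothesis of the lemma, $G$ is regarded as $2\mu_{\max}$-periodic in $\mu$ and $2\nu_{\max}$-periodic in $\nu$. The rule $\Delta_\mu\Delta_\nu\sum_{j,k}G(\mu_j,\nu_k)$ is then exactly the tensor-product trapezoidal rule.

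The first step is the splitting
\[
\varepsilon_{\mathrm{dis}}=c_\Theta\Bigl[\int\!\bigl(I_\mu G-T_\mu G\bigr)d\nu+\Delta_\mu\sum_j\bigl(I_\nu G(\mu_j,\cdot)-T_\nu G(\mu_j,\cdot)\bigr)\Bigr].
\]
Here $I$ denotes the one-dimensional integral and $T$ the one-dimensional trapezoidal sum in the indicated variable. The two pieces are handled symmetrically, so it suffices to bound one one-dimensional error uniformly in the frozen variable.

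The second step is the one-dimensional estimate. For fixed $\nu$, expand $g(\mu)=G(\mu,\nu)$ in its Fourier series $g(\mu)=\sum_m \hat g_m e^{i\pi m\mu/\mu_{\max}}$. The trapezoidal rule with $N_\mu$ nodes integrates every mode exactly except the aliased modes $m\in N_\mu\mathbb Z\setminus\{0\}$. This gives
\[
|I_\mu g-T_\mu g|\le 2\mu_{\max}\sum_{\ell\neq 0}|\hat g_{\ell N_\mu}|.
\]
To bound $\hat g_m$, shift the contour $\mu\mapsto\mu\pm is$ with $s<a_\mu$, choosing the sign according to $m$. Periodicity makes the vertical side contributions cancel, and this yields
\[
|\hat g_m|\le e^{-\pi s|m|/\mu_{\max}}\,\frac{1}{2\mu_{\max}}\int_{-\mu_{\max}}^{\mu_{\max}} M_{\Theta,a}(\mu,\nu)\,d\mu ,
\]
using the strip envelope \eqref{eq:analytic_strip_envelope_main}. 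Summing the geometric series over $\ell$ gives a bound of the form
\[
\frac{2e^{-\pi sN_\mu/\mu_{\max}}}{1-e^{-\pi sN_\mu/\mu_{\max}}}\int M_{\Theta,a}\,d\mu .
\]
Then let $s\uparrow a_\mu$, or absorb the limit into the constant.

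The third step is to integrate over $\nu$ for the first piece, and to sum with weight $\Delta_\nu$ over the $\nu_k$ for the second piece. The integrability of $M_{\Theta,a}\in L^1(\mathbb R^2)$ controls the first piece. For the second piece, the Riemann sums of $\int M_{\Theta,a}(\cdot,\nu)d\mu$ must be bounded, which would use the same strip envelope, or simply its boundedness on the compact window. Taking the supremum over $z$ is harmless because the envelope is $z$-independent. This gives \eqref{eq:analytic_discretization_remark_main}.

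The main obstacle is the periodicity interpretation. A general $F_\Theta$ is not periodic on the window, and the endpoint mismatch would destroy the exponential rate, reverting to the algebraic Euler--Maclaurin rate. I would treat this exactly as the hypothesis states: assume the periodized extension stays analytic in the strip with the same envelope. I would also remark that, for Gaussian-decaying integrands, the mismatch is of the size of the truncation error $\tau_\Theta$ and can be absorbed there.
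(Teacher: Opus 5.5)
Your proposal is correct and follows the paper's proof essentially step for step. Both use the same intermediate quantity $Q_\mu(z)$, which splits $\varepsilon_{\mathrm{dis}}$ into a $\mu$-rule error integrated over $\nu$ and a $\nu$-rule error summed over the $\mu_j$. Each piece is then controlled by the one-dimensional periodic analytic-strip trapezoidal bound, $O(e^{-\pi a_\mu N_\mu/\mu_{\max}})$ and $O(e^{-\pi a_\nu N_\nu/\nu_{\max}})$ respectively. The paper simply cites this bound from Trefethen--Weideman, while you rederive it by aliasing and a contour shift. Both arguments rest on the same periodicity interpretation of the window.

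One small slip: by your own splitting, the second piece should be summed with weight $\Delta_\mu$ over the $\mu_j$, not with weight $\Delta_\nu$ over the $\nu_k$. What you need bounded are therefore the Riemann sums in $\mu$ of $\int M_{\Theta,a}(\mu,\nu)\,d\nu$. The strip envelope does control these, e.g.\ via a sub-mean-value estimate on small disks, and the paper's proof passes over this point silently.
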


\begin{proof}
We split the two-dimensional Riemann sum approximation into two one-dimensional Riemann sum approximations and  introduce 
\begin{equation}
Q_\mu(z)
=
\Delta_\mu
\sum_{j=0}^{N_\mu-1}
\int_{-\nu_{\max}}^{\nu_{\max}}
F_\Theta(z;\mu_j,\nu)\,d\nu .
\end{equation}
Then the discretization error can be written as follows
\begin{eqnarray}\label{1559}
\varepsilon_{\mathrm{dis}}(z)
=
c_\Theta
\left[
\iint_{B_{\mu_{\max},\nu_{\max}}}
F_\Theta(z;\mu,\nu)\,d\mu d\nu
-
Q_\mu(z)
\right]+
c_\Theta
\left[
Q_\mu(z)
-
\Delta_\mu\Delta_\nu
\sum_{j=0}^{N_\mu-1}
\sum_{k=0}^{N_\nu-1}
F_\Theta(z;\mu_j,\nu_k)
\right].
\end{eqnarray}

By the analytic strip assumption, for each fixed $\nu$ the function
$\mu\mapsto F_\Theta(z;\mu,\nu)$ is analytic in the strip
$|\operatorname{Im}\mu|<a_\mu$ and is controlled there by the strip envelope. 
However, on a finite non-periodic interval, analyticity in a strip alone does not by itself imply exponential convergence of a simple rectangular rule, since endpoint mismatch may produce algebraic boundary errors. 
Therefore, we use the additional assumption that the finite window restriction is treated by a periodic trapezoidal rule, or that the endpoint contribution is controlled so that the standard analytic strip trapezoidal estimate applies.

Under this assumption, the one-dimensional trapezoidal estimate for analytic strip functions gives~\cite{trefethen2014exponentially}
\begin{equation}
\left|
\int_{-\mu_{\max}}^{\mu_{\max}}
F_\Theta(z;\mu,\nu)\,d\mu
-
\Delta_\mu
\sum_{j=0}^{N_\mu-1}
F_\Theta(z;\mu_j,\nu)
\right|
\le
C_{\mu,\Theta}
\exp\!\left(
-\frac{2\pi a_\mu}{\Delta_\mu}
\right),
\end{equation}
where $C_{\mu,\Theta}$ is independent of $N_\mu$. 
Since $\Delta_\mu=\tfrac{2\mu_{\max}}{N_\mu}$,
we have
\begin{equation}
\left|
\int_{-\mu_{\max}}^{\mu_{\max}}
F_\Theta(z;\mu,\nu)\,d\mu
-
\Delta_\mu
\sum_{j=0}^{N_\mu-1}
F_\Theta(z;\mu_j,\nu)
\right|
\le
C_{\mu,\Theta}
\exp\!\left(
-\frac{\pi a_\mu N_\mu}{\mu_{\max}}
\right).
\end{equation}
Integrating this estimate over $\nu\in[-\nu_{\max},\nu_{\max}]$ and using the integrable strip envelope, we obtain
\begin{equation}
\left|
\iint_{B_{\mu_{\max},\nu_{\max}}}
F_\Theta(z;\mu,\nu)\,d\mu d\nu
-
Q_\mu(z)
\right|
\le
C'_{\mu,\Theta}
\exp\!\left(
-\frac{\pi a_\mu N_\mu}{\mu_{\max}}
\right),
\end{equation}
where $C'_{\mu,\Theta}$ is independent of $N_\mu$.

Similarly, for each fixed $\mu_j$, the function
$\nu\mapsto F_\Theta(z;\mu_j,\nu)$ is analytic in the strip
$|\operatorname{Im}\nu|<a_\nu$ and satisfies the corresponding strip envelope bound. 
Applying the same analytic strip trapezoidal estimate in the $\nu$ variable gives
\begin{equation}
\left|
Q_\mu(z)
-
\Delta_\mu\Delta_\nu
\sum_{j=0}^{N_\mu-1}
\sum_{k=0}^{N_\nu-1}
F_\Theta(z;\mu_j,\nu_k)
\right|
\le
C'_{\nu,\Theta}
\exp\!\left(
-\frac{\pi a_\nu N_\nu}{\nu_{\max}}
\right),
\end{equation}
where $C'_{\nu,\Theta}$ is independent of $N_\nu$. Combining the two estimates, we find the upper bound of \eqref{1559} to be
\begin{equation}
|\varepsilon_{\mathrm{dis}}(z)|
\le
|c_\Theta|
\left[
C'_{\mu,\Theta}
\exp\!\left(
-\frac{\pi a_\mu N_\mu}{\mu_{\max}}
\right)
+
C'_{\nu,\Theta}
\exp\!\left(
-\frac{\pi a_\nu N_\nu}{\nu_{\max}}
\right)
\right].
\end{equation}
Taking the supremum over $z\in\mathcal D_\Theta$ gives \eqref{eq:analytic_discretization_remark_main}.
\end{proof}

\subsection{KCFE error contribution}
\par     
Next we study the  statistical KCFE error  $\varepsilon_{\mathrm{K}}(z)=\widetilde \Theta_{\mathrm{trunc}}(z)-\widehat \Theta_{T_{\mu,\nu}}(z)$ propagated through the kernel map. Without specifying the kernel $K_{\mu_j,\nu_k}$, we get:
\begin{lem}\label{lem_3_3}
Let $\varepsilon_{\mathrm{K}}(z)$ be the KCFE contribution in the error decomposition
\eqref{eq:three_term_decomp_main}. At each grid point $(\mu_j,\nu_k)$, the KCFE 
$\widehat{\phi}_{n,h}(1;\mu_j,\nu_k)$ is constructed from $n$ independent samples and satisfies the pointwise MSE identity \eqref{eq:M_jk_def_theorem_main}. Then
\begin{eqnarray}
\label{eq:eta_K_bound_main}
&&\sup_{z\in\mathcal D_\Theta}\mathbb E\!\left[
|\varepsilon_{\mathrm{K}}(z)|^2
\right] 
\le \zeta_{\Theta}\\\nonumber
&&\zeta_{\Theta}=
|c_{\Theta}|^2\Delta_{\mu}^2\Delta_{\nu}^2
\sup_{z\in\mathcal D_\Theta}
\left(
\sum_{j=0}^{N_{\mu}-1}
\sum_{k=0}^{N_{\nu}-1}
|\mathcal K_{\Theta}(z;\mu_j,\nu_k)|^2\!
\right)\!\!
\left(
\sum_{j=0}^{N_{\mu}-1}
\sum_{k=0}^{N_{\nu}-1}
  \operatorname{MSE}\!\left[\widehat{\phi}_{n,h}(1;\mu_j,\nu_k)\right]\!\!
\right).
\end{eqnarray}
\end{lem}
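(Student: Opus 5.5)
The argument is a direct Cauchy--Schwarz estimate on the finite grid sum; no independence or cancellation between grid points is needed. Subtracting \eqref{eq:F_hat_def_main} from \eqref{eq:F_tilde_def_main} gives, for each fixed $z$,
\begin{equation*}
\varepsilon_{\mathrm K}(z)=c_\Theta\Delta_\mu\Delta_\nu\sum_{j=0}^{N_\mu-1}\sum_{k=0}^{N_\nu-1}\bigl(\phi(1;\mu_j,\nu_k)-\widehat\phi_{n,h}(1;\mu_j,\nu_k)\bigr)\mathcal K_\Theta(z;\mu_j,\nu_k).
\end{equation*}
The kernel values $\mathcal K_\Theta(z;\mu_j,\nu_k)$ are deterministic, so the only randomness sits in the differences $e_{jk}\equiv\phi(1;\mu_j,\nu_k)-\widehat\phi_{n,h}(1;\mu_j,\nu_k)$.

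Next we bound $|\varepsilon_{\mathrm K}(z)|^2$ pathwise. We apply the Cauchy--Schwarz inequality in $\mathbb C^{N_\mu N_\nu}$ to the vectors $(\mathcal K_\Theta(z;\mu_j,\nu_k))_{j,k}$ and $(e_{jk})_{j,k}$. This gives
\begin{equation*}
|\varepsilon_{\mathrm K}(z)|^2\le|c_\Theta|^2\Delta_\mu^2\Delta_\nu^2\Bigl(\sum_{j,k}|\mathcal K_\Theta(z;\mu_j,\nu_k)|^2\Bigr)\Bigl(\sum_{j,k}|e_{jk}|^2\Bigr).
\end{equation*}
The first factor is deterministic, so taking expectations and using linearity yields $\mathbb E|e_{jk}|^2=\operatorname{MSE}[\widehat\phi_{n,h}(1;\mu_j,\nu_k)]$. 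By \eqref{eq:M_jk_def_theorem_main} this is a finite quantity, since $|\phi_K|\le 1$ and $|\phi|\le1$. Both factors are finite sums, so no integrability issue arises.

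Finally, the MSE factor does not depend on $z$. Taking $\sup_{z\in\mathcal D_\Theta}$ of both sides therefore moves the supremum onto the kernel sum alone, which gives exactly $\zeta_\Theta$. For the Wigner and photon-number kernels this supremum is finite, since the kernel moduli are independent of $z$.

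The proof has no real obstacle. The one point to note is the choice of Cauchy--Schwarz over the variance-additivity route. If the samples at different grid points are independent, one could split the error into squared bias plus variance and obtain a sharper bound, but the bias terms would then need their own cross-term control. Cauchy--Schwarz avoids any independence assumption across grid points and yields precisely the product form stated in the lemma. The price is a factor of order $N_\mu N_\nu$, which the later rate analysis will need to absorb through the choice of $n$ and $h$.
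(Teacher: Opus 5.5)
Your proposal is correct and follows the same route as the paper. You write $\varepsilon_{\mathrm K}(z)$ as a sum of the grid errors weighted by the deterministic kernel values, apply Cauchy--Schwarz over the $N_\mu N_\nu$ grid indices, identify $\mathbb E|\phi-\widehat\phi_{n,h}|^2$ with the pointwise MSE, and take the supremum over $z$ on the kernel factor only; the paper does the same, merely folding the pathwise Cauchy--Schwarz step and the expectation into a single displayed inequality.
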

\begin{proof}
Using the finite sum formulas \eqref{eq:F_tilde_def_main} and \eqref{eq:F_hat_def_main}, we get
\begin{align}
\varepsilon_{\mathrm K}(z)
=
c_\Theta\Delta_{\mu}\Delta_{\nu}
\sum_{j=0}^{N_{\mu}-1}
\sum_{k=0}^{N_{\nu}-1}
\Bigl(\phi(1;\mu_j,\nu_k)-\widehat\phi_{n,h}(1;\mu_j,\nu_k)\Bigr)
\mathcal{K}_{\Theta}(z;\mu_j,\nu_k).
\end{align}
Applying the Cauchy--Schwarz inequality to the later gives
\begin{align}
 \!\!\!\!\! \mathbb E [|\varepsilon_{\mathrm K}(z)|^2]\leq\!  |c_\Theta|^2\Delta_{\mu}^2\Delta_{\nu}^2\left[\sum_{j=0}^{N_{\mu}-1}
\sum_{k=0}^{N_{\nu}-1}\mathbb E\left[|\phi(1;\mu_j,\nu_k)-\widehat\phi_{n,h}(1;\mu_j,\nu_k)|^2\right]\right] \!\!\left[\sum_{j=0}^{N_{\mu}-1}
\sum_{k=0}^{N_{\nu}-1}|\mathcal{K}_{\Theta}(z;\mu_j,\nu_k)|^2\right].
\end{align}
Using \eqref{eq:M_jk_def_theorem_main}, we get
\begin{equation}
\label{eq:epsilon_K_bound_Mjk_main}
\mathbb E\!\left[
|\varepsilon_{\mathrm{K}}(z)|^2
\right]
\le
|c_{\Theta}|^2\Delta_{\mu}^2\Delta_{\nu}^2
\left(
\sum_{j=0}^{N_{\mu}-1}
\sum_{k=0}^{N_{\nu}-1}
|\mathcal K_{\Theta}(z;\mu_j,\nu_k)|^2
\right)
\left(
\sum_{j=0}^{N_{\mu}-1}
\sum_{k=0}^{N_{\nu}-1}
  \operatorname{MSE}\!\left[\widehat{\phi}_{n,h}( 1 ;\mu_j,\nu_k)\right]
\right).
\end{equation}
Taking the supremum over $z\in\mathcal D_\Theta$, we obtain
\eqref{eq:eta_K_bound_main}. 
\end{proof}
For the Gaussian KCFE with the optimized bandwidth \eqref{eq:kcfe_opt_bandwidth_main}, the pointwise MSE satisfies \eqref{eq:kcfe_mse_rate_main}.
Thus, there exists a constant $C_\phi>0$ such that
\begin{equation}
\label{eq:uniform_kcfe_mse_bound_main}
\operatorname{MSE}\!\left[\widehat{\phi}^{G}_{n,h_\phi}(1;\mu_j,\nu_k)\right]
\le
\frac{C_\phi}{n},
\qquad
j=0,\dots,N_\mu-1,\quad k=0,\dots,N_\nu-1.
\end{equation}
Therefore, we can write
\begin{equation}
\label{eq:sum_kcfe_mse_bound_main}
\sum_{j=0}^{N_\mu-1}
\sum_{k=0}^{N_\nu-1}
\operatorname{MSE}\!\left[\widehat{\phi}^{G}_{n,h_\phi}(1;\mu_j,\nu_k)\right]
\le
\frac{C_\phi N_\mu N_\nu}{n}.
\end{equation}
Substituting \eqref{eq:sum_kcfe_mse_bound_main} into \eqref{eq:eta_K_bound_main} and using $\Delta_\mu=\tfrac{2\mu_{\max}}{N_\mu}$,
$\Delta_\nu=\tfrac{2\nu_{\max}}{N_\nu}$,
we have
\begin{equation}
\label{eq:zeta_gaussian_kcfe_bound_simplified_main}
\zeta_{\Theta}
\le
\frac{
16 C_\phi |c_\Theta|^2
\mu_{\max}^2\nu_{\max}^2
}{nN_\mu N_\nu}
\sup_{z\in\mathcal D_\Theta}
\left(
\sum_{j=0}^{N_\mu-1}
\sum_{k=0}^{N_\nu-1}
|\mathcal K_\Theta(z;\mu_j,\nu_k)|^2
\right)
\end{equation}
for the Gaussian kernel \eqref{1614}.
\begin{lem}\label{cor_3_2}
\label{cor:KCFE_error_bounded_kernel_main}
Assume that the reconstruction kernel is uniformly bounded on the grid, i.e., there exists a constant $\kappa_\Theta>0$ such that
\begin{equation}
\label{eq:kernel_uniform_grid_bound_main}
\sup_{z\in\mathcal D_\Theta}
|\mathcal K_\Theta(z;\mu_j,\nu_k)|
\le
\kappa_\Theta,
\qquad
j=0,\dots,N_\mu-1,\quad k=0,\dots,N_\nu-1 .
\end{equation}
Then the Gaussian KCFE error \eqref{eq:zeta_gaussian_kcfe_bound_simplified_main} is bounded by
\begin{equation}
\label{eq:zeta_gaussian_uniform_kernel_bound_main}
\zeta_{\Theta}
\le
\frac{
16 C_\phi |c_\Theta|^2 \kappa_\Theta^2
\mu_{\max}^2\nu_{\max}^2
}{n}=
O\!\left(
\frac{\mu_{\max}^2\nu_{\max}^2}{n}
\right).
\end{equation}
\end{lem}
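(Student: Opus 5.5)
The bound follows directly from the simplified Gaussian KCFE estimate \eqref{eq:zeta_gaussian_kcfe_bound_simplified_main}, so the plan is simply to control the remaining kernel sum uniformly in $z$ by the assumed grid bound \eqref{eq:kernel_uniform_grid_bound_main}.

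\emph{Step 1: bound the kernel sum.} For every fixed $z\in\mathcal D_\Theta$ and every grid index $(j,k)$, the assumption gives $|\mathcal K_\Theta(z;\mu_j,\nu_k)|^2\le\kappa_\Theta^2$. The double sum has exactly $N_\mu N_\nu$ terms, so
\begin{equation*}
\sup_{z\in\mathcal D_\Theta}\sum_{j=0}^{N_\mu-1}\sum_{k=0}^{N_\nu-1}|\mathcal K_\Theta(z;\mu_j,\nu_k)|^2\le N_\mu N_\nu\,\kappa_\Theta^2 .
\end{equation*}
Since the bound is uniform in $z$, passing to the supremum costs nothing.

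\emph{Step 2: substitute and cancel.} Inserting this into \eqref{eq:zeta_gaussian_kcfe_bound_simplified_main}, the factor $N_\mu N_\nu$ in the numerator cancels the $N_\mu N_\nu$ in the denominator. This leaves exactly
\[
16C_\phi|c_\Theta|^2\kappa_\Theta^2\mu_{\max}^2\nu_{\max}^2/n .
\]
Reading off the order gives $O(\mu_{\max}^2\nu_{\max}^2/n)$, where $C_\phi$, $c_\Theta$ and $\kappa_\Theta$ are treated as constants independent of $n$, $N_\mu$, $N_\nu$ and the cutoffs.

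\emph{Main subtlety.} The only real obstacle is making sure the constant $C_\phi$ from \eqref{eq:uniform_kcfe_mse_bound_main} is uniform over the grid. I would justify this from Theorem~\ref{thm2}. The optimized MSE equals $x(1-x)/(nx+1-x)$ with $x=|\phi|^2\in[0,1]$. This expression is at most $x(1-x)/(nx)\le 1/n$ whenever $x>0$, and it equals $0$ when $x=0$. Hence one may take $C_\phi=1$ independently of $(\mu_j,\nu_k)$. The grid points where $|\phi|=1$ (for example, the origin) also give zero MSE for the appropriate bandwidth, so they do not spoil the bound either.

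Everything else is routine algebra, and no assumption on the decay of $\phi$ is needed for this lemma.
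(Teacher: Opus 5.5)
Your argument is correct and is exactly the paper's implicit one-line derivation: bound $\sup_{z}\sum_{j,k}|\mathcal K_\Theta(z;\mu_j,\nu_k)|^2$ by $N_\mu N_\nu\kappa_\Theta^2$, then cancel it against the $N_\mu N_\nu$ in the denominator of \eqref{eq:zeta_gaussian_kcfe_bound_simplified_main}. Your extra check that the optimized MSE $x(1-x)/(nx+1-x)\le 1/n$ for every $x=|\phi|^2\in[0,1]$ is worth keeping: it shows $C_\phi=1$ works uniformly over the ($n$-dependent) grid, whereas the paper only asserts the existence of such a constant in \eqref{eq:uniform_kcfe_mse_bound_main}.
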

For the CF kernel transformation corresponding to the  Wigner  function \eqref{eq:general_kernel_representation}, we have $|\mathcal K_W(q,p;\mu_j,\nu_k)|=1$, and hence the upper bound is simply
\begin{equation}
\label{eq:zeta_wigner_gaussian_kcfe_bound_main}
\zeta_{W}
\le
\frac{
16 C_\phi |c_W|^2
\mu_{\max}^2\nu_{\max}^2
}{n}.
\end{equation}
The same type of bound applies to the Husimi and photon-number cases whenever the corresponding kernel modulus is uniformly bounded on the truncated grid. 
For the Husimi kernel this is immediate because the kernel contains Gaussian damping. 
For the photon-number tomogram, the kernel is a Gaussian factor multiplied by a polynomial, so it is bounded on every finite grid, although the bound may depend on $r$, $\mu_{max}$, and $\nu_{max}$.

\subsection{Total KQSE error convergence rate}
We now combine the three error estimates derived above. 
The following theorem shows that, with a logarithmic growth of the truncation window and a compatible logarithmic refinement of the grid, the total Gaussian KQSE MSE attains a near parametric convergence rate, up to polylogarithmic factors.
\begin{thm}
\label{thm:total_KQSE_convergence_rate_main}
Assume that the conditions of Lemmas~\ref{lem_3_1},~\ref{cor_3_1} and~\ref{cor_3_2} hold.
Let $\widehat{\Theta}_{n,h}(z)$ denote the KQSE estimator of the general
kernel transformation $\Theta(z)$ defined in
\eqref{eq:general_kernel_representation}, constructed from $T_{\mu,\nu}=nN_\mu N_\nu$ quantum state preparations, where $n$ samples are collected per setting
$\{\mu_i,\nu_j\}$, $i\in[1,N_{\mu}]$, $j\in[1,N_{\nu}]$. The accessible reconstruction domain is allowed to expand logarithmically with the number of samples $\mu_{max},\nu_{max}\asymp\log n$, while the reconstruction grid is refined accordingly $N_\mu,N_\nu\asymp(\log n)^2$. 
 Then the estimator achieves
\begin{equation}
\label{eq:total_error_rate_n_tilde_main}
\operatorname{MSE}_{\infty}\!\left(\widehat{\Theta}_{T_{\mu,\nu}}\right)
\equiv
\sup_{z\in\mathcal D_\Theta}
\mathbb E\!\left[
\left|\Theta(z)-\widehat{\Theta}_{T_{\mu,\nu}}(z)\right|^2
\right]=
\widetilde O(T_{\mu,\nu}^{-1}).
\end{equation}
\end{thm}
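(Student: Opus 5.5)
We combine the three error bounds through the decomposition \eqref{eq:three_term_decomp_main}, using the elementary inequality $|a+b+c|^2\le 3(|a|^2+|b|^2+|c|^2)$. Since $\varepsilon_{\mathrm{trunc}}(z)$ and $\varepsilon_{\mathrm{dis}}(z)$ are deterministic, this gives
\begin{equation*}
\sup_{z\in\mathcal D_\Theta}\mathbb E\bigl[|\Theta(z)-\widehat\Theta_{T_{\mu,\nu}}(z)|^2\bigr]\le 3\tau_\Theta^2+3\delta_\Theta^2+3\zeta_\Theta .
\end{equation*}
The three terms are bounded as follows. By Lemma~\ref{lem_3_1}, $\tau_\Theta^2=O(e^{-2\tau_\mu\mu_{\max}}+e^{-2\tau_\nu\nu_{\max}})$. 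By Lemma~\ref{cor_3_1}, $\delta_\Theta^2=O(e^{-2\pi a_\mu N_\mu/\mu_{\max}}+e^{-2\pi a_\nu N_\nu/\nu_{\max}})$. By Lemma~\ref{cor_3_2}, together with the uniform Gaussian KCFE bound \eqref{eq:uniform_kcfe_mse_bound_main}, $\zeta_\Theta=O(\mu_{\max}^2\nu_{\max}^2/n)$.

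Next we insert the prescribed scalings. We take $\mu_{\max}=c_\mu\log n$ and $\nu_{\max}=c_\nu\log n$ with $c_\mu\ge 1/(2\tau_\mu)$ and $c_\nu\ge 1/(2\tau_\nu)$. Then the truncation term is $O(n^{-1})$. We take $N_\mu,N_\nu\asymp(\log n)^2$. Then $N_\mu/\mu_{\max}\asymp\log n$ with a constant we are free to choose, so choosing it large enough, relative to $\pi a_\mu$ and $\pi a_\nu$, makes the discretization term $O(n^{-1})$ as well. The statistical term becomes $O((\log n)^4/n)$, so the total MSE is $O((\log n)^4/n)$ uniformly in $z$.

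It remains to convert this into a rate in $T_{\mu,\nu}$. Here $T_{\mu,\nu}=nN_\mu N_\nu\asymp n(\log n)^4$, hence $\log T_{\mu,\nu}\asymp\log n$ and $n^{-1}\asymp(\log n)^4T_{\mu,\nu}^{-1}$. Therefore the MSE is $O((\log T_{\mu,\nu})^8\,T_{\mu,\nu}^{-1})=\widetilde O(T_{\mu,\nu}^{-1})$, which is \eqref{eq:total_error_rate_n_tilde_main}.

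The main obstacle is keeping the constants uniform as the window grows with $n$. The constant $C_\phi$ in \eqref{eq:uniform_kcfe_mse_bound_main} is uniform, since \eqref{eq:kcfe_mse_rate_main} is bounded by $1/n$ for every grid point. For the Wigner and Husimi kernels, $\kappa_\Theta$ is independent of the window. For the photon-number kernel, however, $\kappa_\Theta$ could grow polynomially in $\mu_{\max}$ and $\nu_{\max}$. We handle this using the Gaussian factor $\phi_0$ in \eqref{eq:H_r_main}, which makes $|\mathcal K_r|$ bounded on all of $\mathbb R^2$, so $\kappa_\Theta$ stays fixed. A similar concern applies to the constants $C'_{\mu,\Theta}$ and $C'_{\nu,\Theta}$ in Lemma~\ref{cor_3_1}. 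These are controlled by the integral of the strip envelope $M_{\Theta,a}$, which lies in $L^1(\mathbb R^2)$, so they can be chosen independently of the window. Should they nevertheless grow polynomially in $\log n$, this only adds polylogarithmic factors and leaves the $\widetilde O$ conclusion unchanged.
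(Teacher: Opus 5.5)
Your proposal is correct and follows the paper's proof essentially step for step. It uses the same bound $3(\tau_\Theta^2+\delta_\Theta^2+\zeta_\Theta)$ and the same choices $\mu_{\max},\nu_{\max}\propto\log n$ and $N_\mu/\mu_{\max},\,N_\nu/\nu_{\max}\propto\log n$, with constants large enough that truncation and discretization are $O(n^{-1})$, which leaves the $O((\log n)^4/n)$ KCFE term. Your explicit conversion to $O((\log T_{\mu,\nu})^8\,T_{\mu,\nu}^{-1})$, and your checks that $C_\phi$, $\kappa_\Theta$ and the quadrature constants stay uniform (or grow only polylogarithmically) as the window expands, make explicit what the paper leaves implicit; for the photon-number kernel you can even take $\kappa_\Theta=1$, since $|\mathcal K_r|=|\phi_r(1;\mu,\nu)|\le 1$.
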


\begin{proof}
Using the inequality $|a+b+c|^2
\le
3\left(
|a|^2+|b|^2+|c|^2
\right)$, we have the upper bound of the error
$\operatorname{MSE}_{\infty}\!\left(\widehat{\Theta}_{n}\right)
\le
3\left(
\tau_\Theta^2
+
\delta_\Theta^2
+
\zeta_\Theta
\right)$.
By Lemmas~\ref{lem_3_1},~\ref{cor_3_1} and~\ref{cor_3_2}, we obtain
\begin{equation}
\label{1201}
\operatorname{MSE}_{\infty}\!\left(\widehat{\Theta}_{T_{\mu,\nu}}\right)
=
O\!\left(
e^{-2\tau_\mu\mu_{max}}
+
e^{-2\tau_\nu\nu_{max}}
+
e^{-2\pi a_\mu N_\mu/\mu_{max}}
+
e^{-2\pi a_\nu N_\nu/\nu_{max}}
+
\frac{\mu_{max}^2\nu_{\max}^2}{n}
\right).
\end{equation}
As discussed above, the cutoff parameters $\mu_{max}$ and
$\nu_{max}$ determine the finite reconstruction window in the
tomographic parameter space. They must be chosen sufficiently large for
the exponentially decaying tails of the characteristic function to make
the truncation error negligible. At the same time, their growth must be
sufficiently slow to control the statistical KCFE contribution. More precisely, the truncation terms vanish provided that $\mu_{max}\to\infty$, $\nu_{max}\to\infty$, whereas the KCFE contribution vanishes if
$\tfrac{\mu_{max}^2\nu_{max}^2}{n}\to 0$,
or, equivalently,
\begin{equation}
\mu_{max}\nu_{max}
=
o\!\left(n^{1/2}\right).
\end{equation}
A convenient choice satisfying all these requirements is
\begin{equation}
\mu_{max}=A_\mu\log n,
\qquad
\nu_{max}=A_\nu\log n,
\end{equation}
where $A_\mu,A_\nu>0$ are fixed constants. 
It remains to choose the grid sizes $N_\mu$ and $N_\nu$ so that the
analytic strip discretization errors are also polynomially small in
$n$. It is sufficient to take
\begin{equation}
N_\mu
=
\left\lceil B_\mu\mu_{\max}\log n\right\rceil,
\qquad
N_\nu
=
\left\lceil B_\nu\nu_{\max}\log n\right\rceil,
\end{equation}
where $B_\mu,B_\nu>0$ are fixed constants. 
Under this choice of parameters, we can rewrite \eqref{1201} as
\begin{equation}
\label{1202}
\operatorname{MSE}_{\infty}\!\left(\widehat{\Theta}_{T_{\mu,\nu}}\right)
=
O\!\left(
n^{-2\tau_\mu A_\mu}
+
n^{-2\tau_\nu A_\nu}
+
n^{-2\pi a_\mu B_\mu}
+
n^{-2\pi a_\nu B_\nu}
+
\frac{(\log n)^4}{n}
\right).
\end{equation}
We choose the constants to be
\begin{equation}
\label{1204}
2\tau_\mu A_\mu\geq 1,
\qquad
2\tau_\nu A_\nu\geq 1,
\qquad
2\pi a_\mu B_\mu\geq 1,
\qquad
2\pi a_\nu B_\nu\geq 1.
\end{equation}
These conditions are not restrictive from an asymptotic perspective, since they impose only lower bounds on the fixed positive constants $A_\mu,A_\nu,B_\mu$, and $B_\nu$. In particular, they can always be satisfied whenever $\tau_\mu,\tau_\nu,a_\mu,a_\nu>0$. Their role is to ensure that the truncation and discretization errors decay at least as fast as $n^{-1}$. Choosing larger constants does not improve the resulting asymptotic rate, which is determined by the KCFE contribution, but it enlarges the reconstruction window or increases the grid resolution. Therefore, the constants may naturally be selected at, or slightly above, the threshold values \eqref{1204}.  Finally, we get
\begin{equation}
\label{1620}
\operatorname{MSE}_{\infty}\!\left(\widehat{\Theta}_{T_{\mu,\nu}}\right)
=
O\!\left(
\frac{(\log n)^4}{n}
\right)
=
\widetilde O\!\left(n^{-1}\right).
\end{equation}
Under the parameter choice \eqref{1202}, the numbers
of tomographic settings satisfy $N_\mu, N_\nu=\Theta\!\left((\log n)^2\right)$ and the total number of quantum state preparations  is $T_{\mu,\nu}=\Theta\!\left(n(\log n)^4\right)$.
Thus, up to logarithmic factors, $n$ and $T_{\mu,\nu}$ have the same
polynomial scaling, and the convergence rate \eqref{1620} can equivalently
be expressed as \eqref{eq:total_error_rate_n_tilde_main}.
\end{proof}

\section{Numerical Study}
\label{sec:numerical_study}

\subsection{Coherent cat state}
\label{subsec:synthetic_study}

We first illustrate the performance of KQSE on synthetic quadrature
data generated from the three-component coherent cat state (CCS)
\begin{equation}
\label{eq:synthetic_ccs_state}
|\psi_{\mathrm{CCS}}\rangle
=
\mathcal N_c
\sum_{\ell=1}^{3}
|a_\ell\rangle,
\qquad
a_\ell
=
a\exp\!\left(
\frac{2\pi i(\ell-1)}{3}
\right),
\qquad
a=1+0.5i,
\end{equation}
where $|a_\ell\rangle$ are Glauber coherent states and
$\mathcal N_c$ is the normalization constant,
\begin{equation}
\label{eq:ccs_normalization}
|\mathcal N_c|^{-2}
=
\sum_{j,k=1}^{3}
\exp\!\left[
-|a|^2
+
|a|^2
\exp\!\left(
\frac{2\pi i(k-j)}{3}
\right)
\right].
\end{equation}
This state is non-Gaussian and produces multimodal quadrature
distributions, providing a useful benchmark for nonparametric
reconstruction. Using \eqref{1530} and the notation $\beta\equiv\sqrt{\mu^2+\nu^2}$, the symplectic tomogram of the CCS reads
\begin{equation}
\label{eq:ccs_exact_tomogram}
\mathcal W_{\mathrm{CCS}}(x|a,\mu,\nu)
=
\mathcal W_0(x|\beta)
|\mathcal N_c|^2
\sum_{j,k=1}^{3}
\exp\!\left[
ixs_{jk}+d_{jk}
\right],
\end{equation}
where
\begin{equation}
\label{eq:ccs_sjk}
s_{jk}
\equiv
\frac{\sqrt{2}}{\beta^2}
\left[
(\nu-i\mu)a^*
\exp\!\left(
-\frac{2\pi i(k-1)}{3}
\right)
-
(\nu+i\mu)a
\exp\!\left(
\frac{2\pi i(j-1)}{3}
\right)
\right],
\end{equation}
and
\begin{align}
\label{eq:ccs_djk}
d_{jk}
&\equiv
-|a|^2
+
\frac{1}{2\beta^2}
\Bigg[
(\nu+i\mu)^2a^2
\exp\!\left(
\frac{4\pi i(j-1)}{3}
\right) +
(\nu-i\mu)^2(a^*)^2
\exp\!\left(
-\frac{4\pi i(k-1)}{3}
\right)
\Bigg].
\end{align}
The corresponding CF is
\begin{equation}
\label{eq:ccs_exact_CF}
\phi_{\mathrm{CCS}}(t;a,\mu,\nu)
=
|\mathcal N_c|^2
\sum_{j,k=1}^{3}
\phi_0(t+s_{jk};\beta)e^{d_{jk}},
\qquad
\phi_0(t;\beta)
=
\exp\!\left(
-\frac{t^2\beta^2}{4}
\right).
\end{equation}
Using the integral connection of the CF and the Wigner function, we can deduce
\begin{align}
\label{eq:ccs_exact_Wigner}
W_{\mathrm{CCS}}(q,p)
&=
\frac{|\mathcal N_c|^2}{\pi}
\sum_{j,k=1}^{3}
\exp\Bigg[
-2|\gamma|^2
+
2\gamma a_k^*
+
2\gamma^*a_j
-a_ja_k^*
-\frac{|a_j|^2+|a_k|^2}{2}
\Bigg],
\qquad
\gamma=\frac{q+ip}{\sqrt{2}}.
\end{align}
Similarly, the photon-number tomogram corresponding to the convention
\eqref{dequantizer-ph-number} can be written as
\begin{equation}
\label{eq:ccs_exact_photon_tomogram}
\mathsf w^{\mathrm{CCS}}_r(\alpha)
=
\frac{|\mathcal N_c|^2}{r!}
e^{-|a|^2-|\alpha|^2}
\left|
\sum_{\ell=1}^{3}
e^{-\alpha^*a_\ell}
(\alpha+a_\ell)^r
\right|^2,
\qquad
r=0,1,2,\ldots .
\end{equation}
We consider $n\in\{500,1000,2000\}$ quadrature samples per $(\mu,\nu)$ setting. For each value of $n$,
the complete sampling and reconstruction procedure is repeated
independently $R=100$ times. We select $N_\mu=N_\nu=41$, and the corresponding number of
quadrature samples entering one complete reconstruction is
$T_{\mu,\nu}=\{840500,1681000,3362000\}$ for
$n\in\{500,1000,2000\}$, respectively. To quantify the reconstruction accuracy, we use integrated and
uniform pointwise errors. For a single reconstruction
$\widehat{\Theta}^{(s)}$ of a target quantity $\Theta$, the integrated
squared error is
\begin{equation}
\label{eq:ISE_general}
\operatorname{ISE}_{\Theta}^{(s)}
=
\int_{\mathcal D_\Theta}
\left|
\widehat{\Theta}^{(s)}(z)
-
\Theta_{\mathrm{ref}}(z)
\right|^2
\,dz .
\end{equation}
\begin{figure}[h]
\centering
\includegraphics[width=0.85\textwidth]{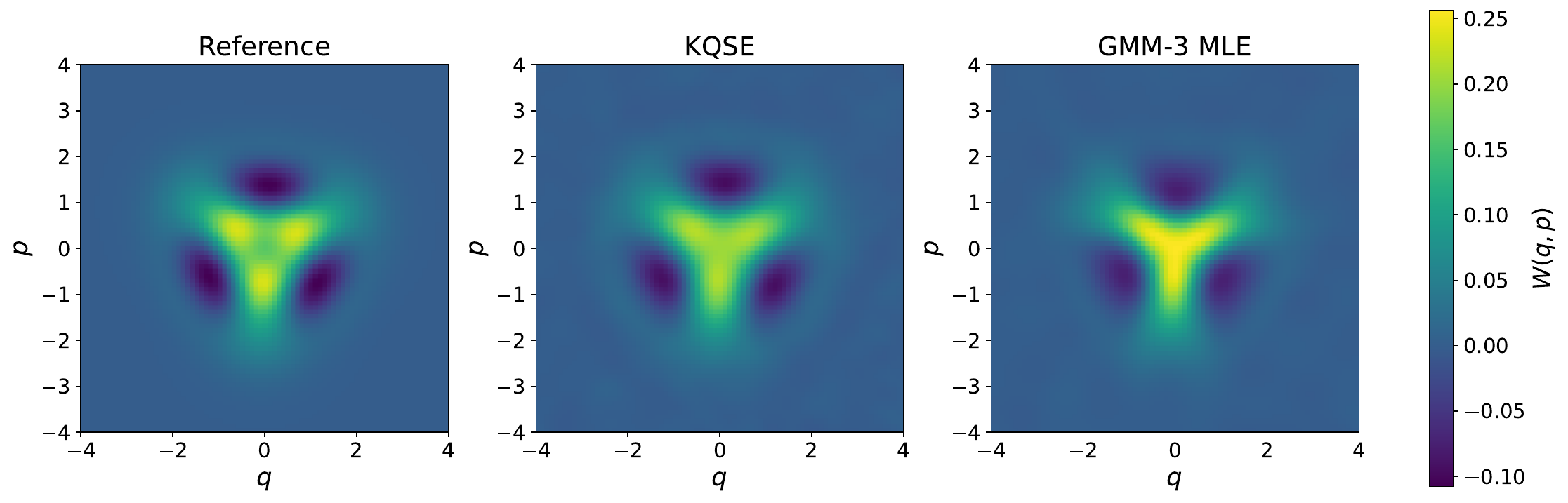}
\caption{
The estimate of the Wigner function  for the CCS
\eqref{eq:synthetic_ccs_state} with $a=1+0.5i$ based on $n=1000$
quadrature samples per tomographic setting. The reference
function \eqref{eq:ccs_exact_Wigner} is compared with KQSE and the GMM-3 MLE
estimators obtained from the same synthetic quadrature data.
}
\label{fig:synthetic_wigner}
\end{figure}
For the synthetic benchmark, averaging over the $R$ independent
realizations gives the empirical mean integrated squared error
\begin{equation}
\label{eq:MISE_general}
\widehat{\operatorname{MISE}}_{\Theta}
=
\frac{1}{R}
\sum_{s=1}^{R}
\widehat{\operatorname{ISE}}_{\Theta}^{(s)} .
\end{equation}
To characterize the largest pointwise statistical error, we also use
the finite-grid empirical uniform pointwise MSE
\begin{equation}
\label{eq:empirical_uniform_pointwise_MSE}
\widehat{\operatorname{MSE}}_{\infty,\Gamma}
\left(\widehat{\Theta}\right)
=
\max_{z_m\in\Gamma_\Theta}
\left\{
\frac{1}{R}
\sum_{s=1}^{R}
\left|
\widehat{\Theta}^{(s)}(z_m)
-
\Theta_{\mathrm{ref}}(z_m)
\right|^2
\right\},
\end{equation}
where $\Gamma_\Theta$ denotes the corresponding finite evaluation
grid. Importantly, the average over independent realizations is taken
at each grid point before the maximum is evaluated, consistently with
the uniform pointwise MSE considered in
Theorem~\ref{thm:total_KQSE_convergence_rate_main}.

Numerically, the integral in Eq.~\eqref{eq:ISE_general} is evaluated
using the same uniform-grid rectangle quadrature as in the
reconstruction: the weight is $\Delta q\,\Delta p$ for the Wigner
function and $\Delta\alpha$ for the photon-number tomograms on the
real-displacement slice.

The KQSE estimated CCS Wigner function and photon-count tomogram were compared with MLE based on parametric Gaussian mixture models (GMM) with three components ($GMM = 3$). Figures~\ref{fig:synthetic_wigner} and
\ref{fig:synthetic_photon} show a fixed reconstruction for
$n=1000$. 
The figures illustrate a single finite sample reconstruction, whereas
all error values reported below are computed from the full ensemble
of $R=100$ independent realizations.
\begin{figure}[h]
\centering
\includegraphics[width=0.8\textwidth]{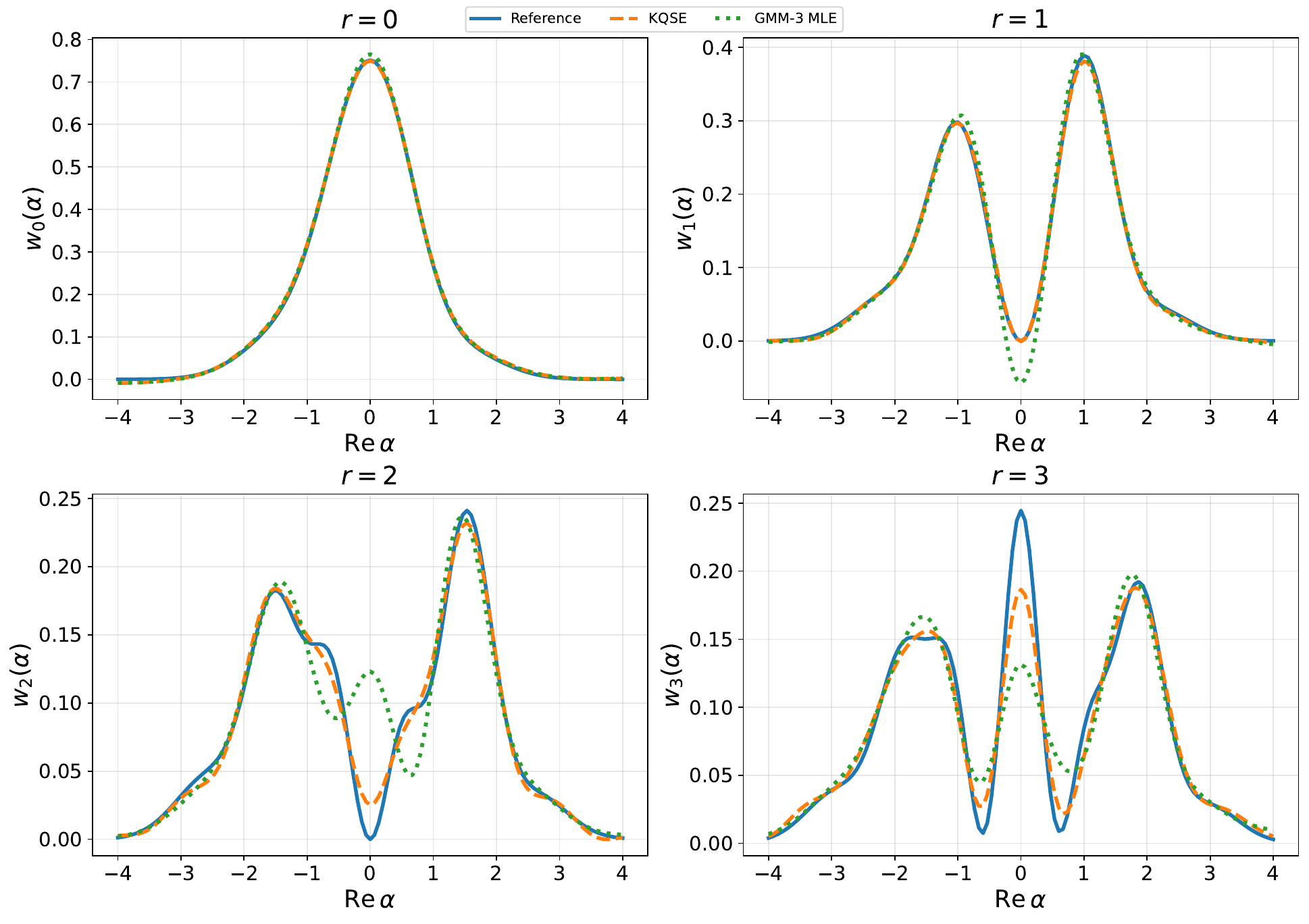}
\caption{
The estimate of the Photon-number tomogram of the CCS  based on $n=1000$ quadrature samples per tomographic setting. The panels show
$\mathsf w_r(\alpha)$, $r=0,1,2,3$, on the real displacement slice
$\operatorname{Im}\alpha=0$. The reference reconstructions \eqref{eq:ccs_exact_photon_tomogram} are compared with KQSE and GMM-3 MLE
estimators obtained from the same synthetic quadrature data.
}
\label{fig:synthetic_photon}
\end{figure}
The MISE \eqref{eq:MISE_general} of the estimators obtained from the $R=100$ independent
realizations are summarized in
Table~\ref{tab:synthetic_mise}.
\begin{table}[H]
\caption{
The MISE of the KQSE and GMM-3 MLE of the Wigner-function and the photon-number
tomogram. Each value is averaged over $R=100$
independent realizations.
}
\label{tab:synthetic_mise}
\centering
\small
\setlength{\tabcolsep}{4pt}
\begin{tabularx}{\textwidth}{@{}CCCCC@{}}
\toprule
\textbf{Function} &
\textbf{Method} &
$\boldsymbol{n=500}$ &
$\boldsymbol{n=1000}$ &
$\boldsymbol{n=2000}$ \\
\midrule
\multirow[m]{2}{*}{$W(q,p)$}
& KQSE      & $4.580\times10^{-3}$ & $2.059\times10^{-3}$ & $7.197\times10^{-4}$ \\
& GMM-3 MLE & $7.088\times10^{-3}$ & $7.275\times10^{-3}$ & $7.452\times10^{-3}$ \\
\midrule
\multirow[m]{2}{*}{$\mathsf w_0(\alpha)$}
& KQSE      & $2.296\times10^{-4}$ & $7.870\times10^{-5}$ & $3.655\times10^{-5}$ \\
& GMM-3 MLE & $2.011\times10^{-4}$ & $1.302\times10^{-4}$ & $1.092\times10^{-4}$ \\
\midrule
\multirow[m]{2}{*}{$\mathsf w_1(\alpha)$}
& KQSE      & $3.688\times10^{-4}$ & $1.204\times10^{-4}$ & $4.779\times10^{-5}$ \\
& GMM-3 MLE & $1.834\times10^{-3}$ & $1.779\times10^{-3}$ & $1.746\times10^{-3}$ \\
\midrule
\multirow[m]{2}{*}{$\mathsf w_2(\alpha)$}
& KQSE      & $1.121\times10^{-3}$ & $5.276\times10^{-4}$ & $1.511\times10^{-4}$ \\
& GMM-3 MLE & $7.057\times10^{-3}$ & $7.496\times10^{-3}$ & $7.710\times10^{-3}$ \\
\midrule
\multirow[m]{2}{*}{$\mathsf w_3(\alpha)$}
& KQSE      & $3.845\times10^{-3}$ & $1.857\times10^{-3}$ & $5.440\times10^{-4}$ \\
& GMM-3 MLE & $5.694\times10^{-3}$ & $6.177\times10^{-3}$ & $6.498\times10^{-3}$ \\
\bottomrule
\end{tabularx}
\end{table}
The corresponding uniform pointwise MSEs
\eqref{eq:empirical_uniform_pointwise_MSE} are given in
Table~\ref{tab:synthetic_mseinf}.

\begin{table}[H]
\caption{
Finite-grid uniform pointwise MSE of the KQSE and GMM-3 MLE of the Wigner-function and the photon-number
tomogram. At every evaluation point, the squared error is
first averaged over the $R=100$ realizations and the maximum is then
taken over the evaluation grid.
}
\label{tab:synthetic_mseinf}
\centering
\small
\setlength{\tabcolsep}{4pt}
\begin{tabularx}{\textwidth}{@{}CCCCC@{}}
\toprule
\textbf{Function} &
\textbf{Method} &
$\boldsymbol{n=500}$ &
$\boldsymbol{n=1000}$ &
$\boldsymbol{n=2000}$ \\
\midrule
\multirow[m]{2}{*}{$W(q,p)$}
& KQSE      & $2.848\times10^{-3}$ & $1.766\times10^{-3}$ & $5.941\times10^{-4}$ \\
& GMM-3 MLE & $8.332\times10^{-3}$ & $8.995\times10^{-3}$ & $9.362\times10^{-3}$ \\
\midrule
\multirow[m]{2}{*}{$\mathsf w_0(\alpha)$}
& KQSE      & $1.240\times10^{-4}$ & $2.705\times10^{-5}$ & $7.983\times10^{-6}$ \\
& GMM-3 MLE & $1.463\times10^{-4}$ & $1.387\times10^{-4}$ & $1.445\times10^{-4}$ \\
\midrule
\multirow[m]{2}{*}{$\mathsf w_1(\alpha)$}
& KQSE      & $2.476\times10^{-4}$ & $5.064\times10^{-5}$ & $1.342\times10^{-5}$ \\
& GMM-3 MLE & $3.284\times10^{-3}$ & $3.303\times10^{-3}$ & $3.305\times10^{-3}$ \\
\midrule
\multirow[m]{2}{*}{$\mathsf w_2(\alpha)$}
& KQSE      & $1.656\times10^{-3}$ & $8.637\times10^{-4}$ & $2.263\times10^{-4}$ \\
& GMM-3 MLE & $1.453\times10^{-2}$ & $1.558\times10^{-2}$ & $1.608\times10^{-2}$ \\
\midrule
\multirow[m]{2}{*}{$\mathsf w_3(\alpha)$}
& KQSE      & $8.420\times10^{-3}$ & $4.045\times10^{-3}$ & $1.163\times10^{-3}$ \\
& GMM-3 MLE & $1.206\times10^{-2}$ & $1.330\times10^{-2}$ & $1.407\times10^{-2}$ \\
\bottomrule
\end{tabularx}
\end{table}

The KQSE errors decrease systematically with increasing sample size
for all reconstructed quantities considered. In contrast, the GMM-3 MLE
reconstruction exhibits clear saturation for the Wigner function and
for $\mathsf w_1$, $\mathsf w_2$, and $\mathsf w_3$. For
$\mathsf w_0$, GMM-3 has a slightly smaller MISE at $n=500$, whereas
KQSE becomes more accurate at $n=1000$ and $n=2000$. This behaviour
is consistent with the expected finite sample variance--bias
trade-off: a restricted parametric model can be competitive at small
sample sizes, but its model approximation bias is not eliminated by
increasing $n$.

\subsection{CCS state not confined to a low-dimensional Fock subspace}
\label{subsec:synthetic_state_methods}
Recent developments in quantum state reconstruction employ a range of techniques, including Bayesian methods for CV systems~\cite{Chapman:22}, neural-network-based machine learning approaches~\cite{PhysRevLett.127.140502,PhysRevResearch.3.033278}, and convex-optimization frameworks~\cite{PhysRevApplied.18.044041}. In most cases, these methods assume a finite-dimensional or otherwise parameterized description of the state, such as a Hilbert space truncation or a neural-network ansatz, and determine the corresponding parameters through potentially high dimensional optimization. Although these strategies can be highly adaptable to experimental data, rigorous statistical guarantees on the convergence of the resulting state estimator are typically unavailable. A more detailed discussion on this is given in the Supplemental Material of Ref.~\cite{markovich2025nonparametric}.
\par In
order to compare KQSE with these state of art  state reconstruction methods we consider a substantially more demanding synthetic benchmark.
We use the same CCS~\eqref{eq:synthetic_ccs_state} but with the bigger coherent amplitude $a=3+0.5i$. The common synthetic homodyne data set contains
$N_{\theta}=72$ phase settings with $n=2000$ quadrature samples per
setting, corresponding to $T=144000$ measurements in total.
\begin{figure}[ht]
\centering
\includegraphics[width=0.8\textwidth]
{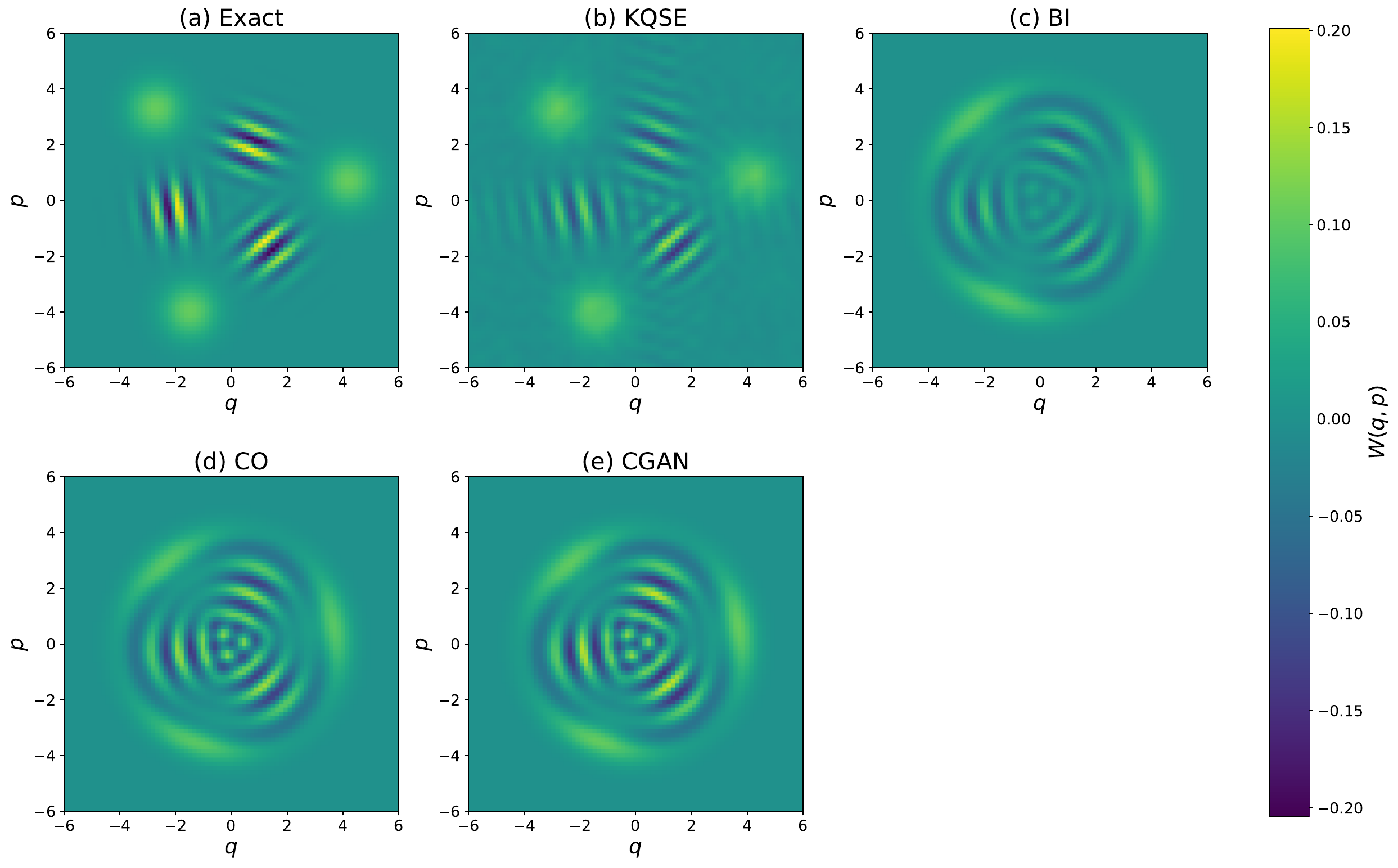}
\caption{
The Wigner function of the CCS with $a=3+0.5i$ compared to the BI, CO, and CGAN estimators with the Fock space cut-off
$N_{\mathrm{cut}}=10$, whereas KQSE is cut-off free.
}
\label{fig:hardcat_wigner_N10}
\end{figure}
The same data set is used for KQSE, Bayesian inference (BI), convex optimization (CO), and the conditional generative adversarial network (CGAN). For the finite dimensional reconstruction methods, we consider Fock-space cutoffs of $N_{\mathrm{cut}}=10$ and $N_{\mathrm{cut}}=20$. KQSE, by contrast, does not rely on a Fock-space truncation, and its reconstruction is therefore unchanged between the two comparisons. This difference is particularly important for the state considered here. With $|a|^2=9.25$, the photon-number distribution has non-negligible weight beyond the lowest Fock levels, making $N_{\mathrm{cut}}=10$ a rather restrictive approximation. The comparison at $N_{\mathrm{cut}}=20$ therefore tests the extent to which the finite dimensional methods benefit from access to a substantially enlarged reconstruction space.
\begin{figure}[ht]
\centering
\includegraphics[width=0.8\textwidth]
{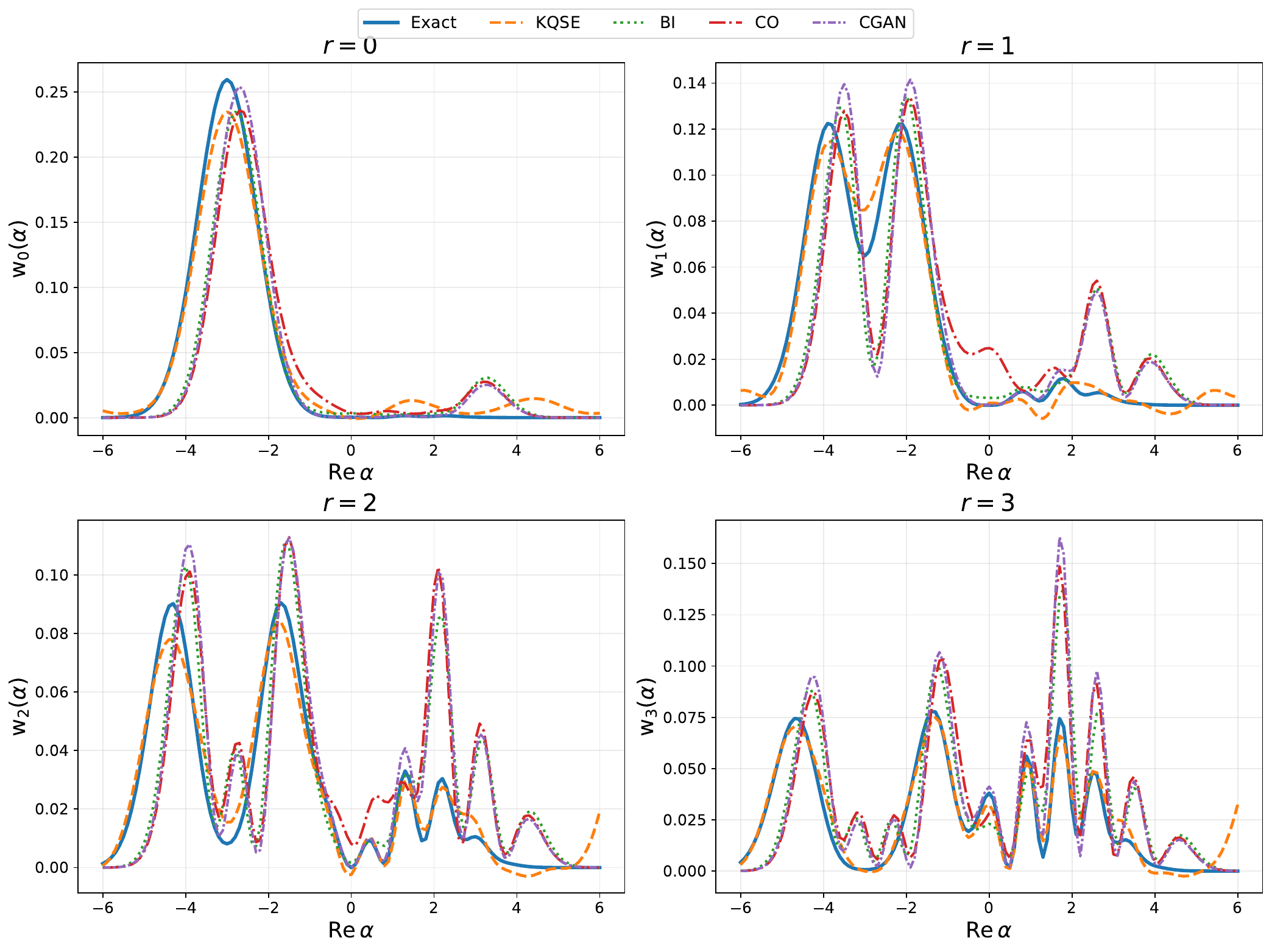}
\caption{
The Photon-number tomograms
$\mathsf w_r(\alpha)$, $r=0,1,2,3$, for the CCS with $a=3+0.5i$ compared to the BI, CO, and CGAN estimators with the Fock space cut-off
$N_{\mathrm{cut}}=10$, whereas KQSE is cut-off free.
The real displacement slice $\operatorname{Im}\alpha=0$ is shown.
}
\label{fig:hardcat_photon_N10}
\end{figure}
Since the benchmark is based on a single common data set rather than on an ensemble of independently generated realizations, we do not provide the statistical MISE. We  report the integrated squared error (ISE) for each individual reconstruction, as defined in Eq.~\eqref{eq:ISE_general}, together with the maximum squared pointwise discrepancy $E_{\infty,\Gamma_\Theta}$ over the finite evaluation grid. The latter corresponds to Eq.~\eqref{eq:empirical_uniform_pointwise_MSE} evaluated for $R=1$.
In Figure~\ref{fig:hardcat_wigner_N10}  the exact Wigner function is compared with the  KQSE, BI, CO, and CGAN for $N_{\mathrm{cut}}=10$ estimators. The corresponding photon-number tomograms are shown in
Fig.~\ref{fig:hardcat_photon_N10}.
Similar comparison is done for the $N_{cut}=20$ that is more beneficial for the parametric methods, while nothing is changed for the KQSE. The results are shown in Fig.~\ref{fig:hardcat_wigner_N20}. As expected, with the bigger cut the parametric methods perform better then with the smaller one. 
\begin{figure}[h]
\centering
\includegraphics[width=0.8\textwidth]
{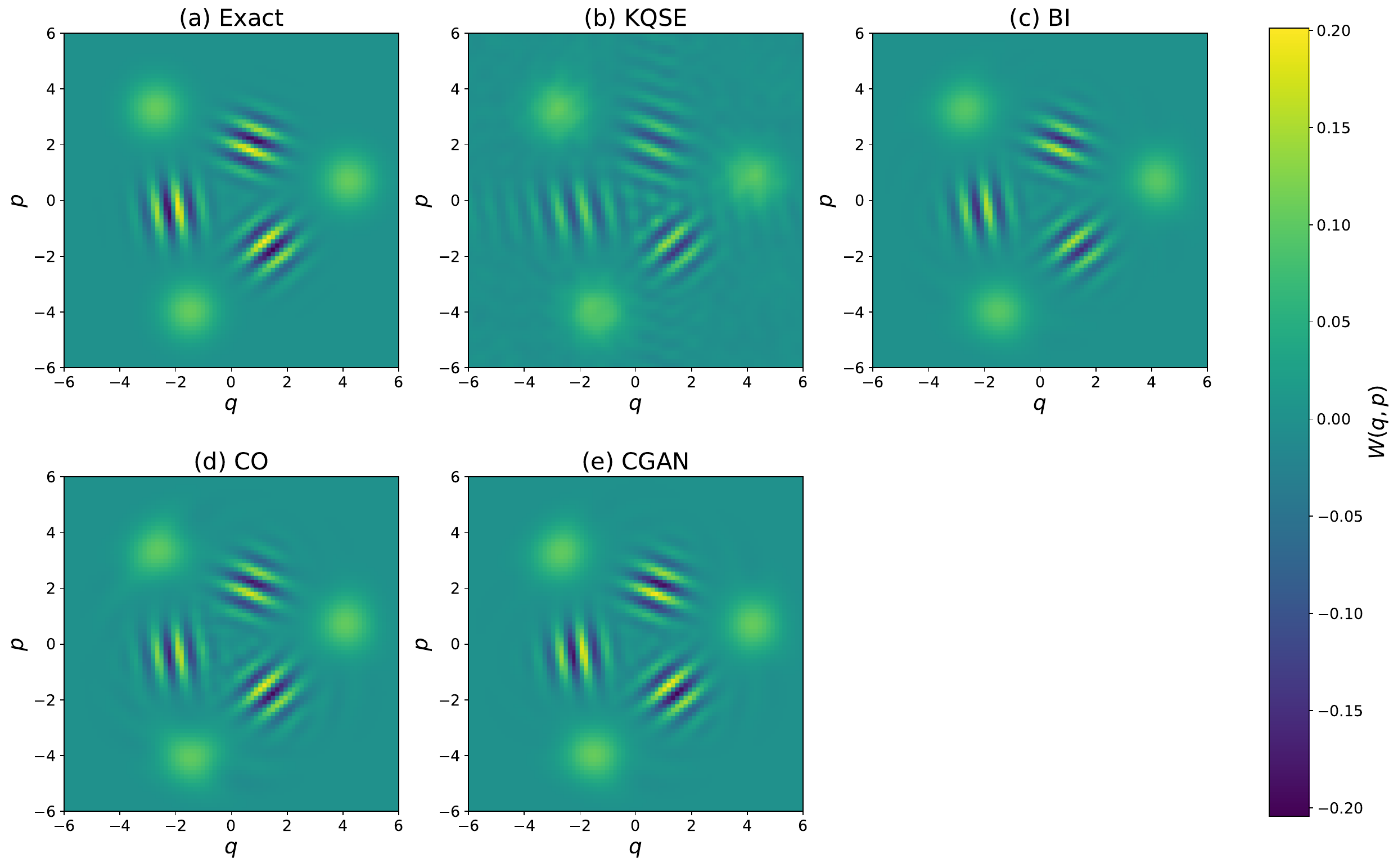}
\caption{
The Wigner function of the CCS with $a=3+0.5i$ compared to the BI, CO, and CGAN estimators with the Fock space cut-off
$N_{\mathrm{cut}}=20$, whereas 
KQSE is identical to that in Fig.~\ref{fig:hardcat_wigner_N10}, since
it does not require a Fock space cut-off.
}
\label{fig:hardcat_wigner_N20}
\end{figure}
\begin{figure}[h]
\centering
\includegraphics[width=0.8\textwidth]
{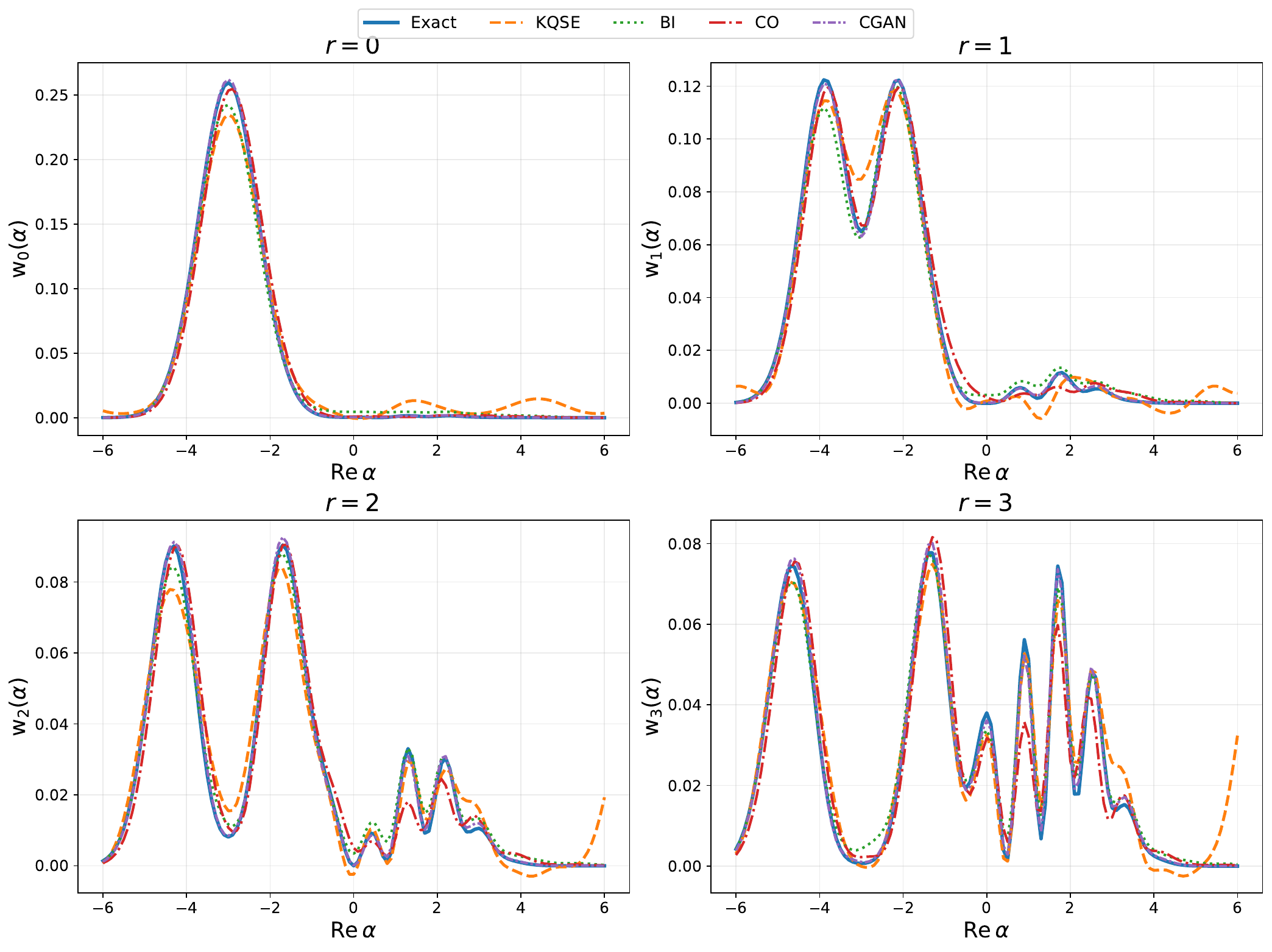}
\caption{
The photon-number tomograms
$\mathsf w_r(\alpha)$, $r=0,1,2,3$  of the CCS with $a=3+0.5i$ compared to the BI, CO, and CGAN estimators with the Fock space cut-off
$N_{\mathrm{cut}}=20$, whereas 
KQSE is identical to that in Fig.~\ref{fig:hardcat_photon_N10}, since
it does not require a Fock space cut-off.
}
\label{fig:hardcat_photon_N20}
\end{figure}
The numerical discrepancies are summarized in
Table~\ref{tab:hardcat_state_methods}. Each table entry contains
$\operatorname{ISE}/E_{\infty,\Gamma}$ for the corresponding
function and method. Bold entries denote the smallest value of both
metrics within the corresponding cut-off block.
\begin{table}[H]
\caption{
Estimation errors for the CCS with $a=3+0.5i$. Each numerical entry is reported as
$\operatorname{ISE}/E_{\infty,\Gamma}$.
The Fock space cut-off applies only to BI, CO, and CGAN while KQSE is
cut-off-free and consequently has the same values in both blocks.
Bold entries identify the smallest discrepancies for the corresponding
observable and cut-off.
}
\label{tab:hardcat_state_methods}
\centering
\small
\setlength{\tabcolsep}{5pt}

\resizebox{\textwidth}{!}{%
\begin{tabular}{@{}llcccc@{}}
\toprule
$\boldsymbol{N_{\mathrm{cut}}}$ &
\textbf{Function} &
\textbf{KQSE} &
\textbf{BI} &
\textbf{CO} &
\textbf{CGAN} \\
\midrule

\multirow{5}{*}{$10$}
& $W(q,p)$
& $\mathbf{2.323\times10^{-2}/1.022\times10^{-2}}$
& $1.109\times10^{-1}/3.312\times10^{-2}$
& $9.618\times10^{-2}/1.689\times10^{-2}$
& $1.012\times10^{-1}/1.830\times10^{-2}$ \\

& $\mathsf w_0(\alpha)$
& $\mathbf{1.073\times10^{-3}/6.15\times10^{-4}}$
& $8.780\times10^{-3}/8.963\times10^{-3}$
& $1.377\times10^{-2}/1.307\times10^{-2}$
& $1.123\times10^{-2}/1.127\times10^{-2}$ \\

& $\mathsf w_1(\alpha)$
& $\mathbf{4.378\times10^{-4}/4.10\times10^{-4}}$
& $5.468\times10^{-3}/3.549\times10^{-3}$
& $7.747\times10^{-3}/4.197\times10^{-3}$
& $7.455\times10^{-3}/5.147\times10^{-3}$ \\

& $\mathsf w_2(\alpha)$
& $\mathbf{3.859\times10^{-4}/3.73\times10^{-4}}$
& $5.153\times10^{-3}/3.160\times10^{-3}$
& $7.521\times10^{-3}/5.437\times10^{-3}$
& $7.124\times10^{-3}/5.176\times10^{-3}$ \\

& $\mathsf w_3(\alpha)$
& $\mathbf{4.439\times10^{-4}/1.048\times10^{-3}}$
& $5.325\times10^{-3}/4.591\times10^{-3}$
& $7.526\times10^{-3}/5.513\times10^{-3}$
& $7.696\times10^{-3}/7.827\times10^{-3}$ \\

\midrule

\multirow{5}{*}{$20$}
& $W(q,p)$
& $2.323\times10^{-2}/1.022\times10^{-2}$
& $5.454\times10^{-3}/2.300\times10^{-3}$
& $5.492\times10^{-3}/2.918\times10^{-3}$
& $\mathbf{1.009\times10^{-3}/4.52\times10^{-4}}$ \\

& $\mathsf w_0(\alpha)$
& $1.073\times10^{-3}/6.15\times10^{-4}$
& $4.37\times10^{-4}/3.58\times10^{-4}$
& $6.01\times10^{-4}/3.75\times10^{-4}$
& $\mathbf{6.0\times10^{-6}/6.0\times10^{-6}}$ \\

& $\mathsf w_1(\alpha)$
& $4.378\times10^{-4}/4.10\times10^{-4}$
& $1.84\times10^{-4}/1.44\times10^{-4}$
& $2.37\times10^{-4}/1.33\times10^{-4}$
& $\mathbf{6.0\times10^{-6}/3.0\times10^{-6}}$ \\

& $\mathsf w_2(\alpha)$
& $3.859\times10^{-4}/3.73\times10^{-4}$
& $8.5\times10^{-5}/3.6\times10^{-5}$
& $2.90\times10^{-4}/2.24\times10^{-4}$
& $\mathbf{1.6\times10^{-5}/1.3\times10^{-5}}$ \\

& $\mathsf w_3(\alpha)$
& $4.439\times10^{-4}/1.048\times10^{-3}$
& $8.1\times10^{-5}/3.2\times10^{-5}$
& $4.69\times10^{-4}/4.17\times10^{-4}$
& $\mathbf{2.6\times10^{-5}/1.9\times10^{-5}}$ \\

\bottomrule
\end{tabular}%
}
\end{table}
At the lower cut-off, $N_{\mathrm{cut}}=10$, KQSE gives the smallest
ISE and the smallest maximum squared pointwise discrepancy for every
observable considered.
This behaviour is consistent with the restrictive Fock space
truncation imposed on a CCS whose photon-number support
extends significantly beyond the lowest ten sectors.
The two cut-off regimes therefore illustrate an important distinction
between the reconstruction strategies. KQSE avoids a Fock space
truncation altogether: the same nonparametric KCFE
can be used for all target quantities and is unchanged when
the  cut-off of the competing methods is varied. This is
advantageous when a low dimensional truncation does not adequately
contain the state. On the other hand, when a sufficiently expressive
finite dimensional state space is available, the reconstructed
density matrix methods can exploit this additional representation
capacity and achieve substantially smaller finite data errors, as
demonstrated here by CGAN at $N_{\mathrm{cut}}=20$.

\subsection{Kitten state}
\label{subsec:experimental_study}
We next apply the same reconstruction pipeline to experimental homodyne
data for a non-Gaussian optical kitten state generated by conditional measurements at a highly unbalanced beam splitter from
Refs.~\cite{lvovsky2002quantumoptical,lvovsky2004iterative}. A weak coherent state of amplitude $\alpha$ and a conditionally prepared single-photon Fock state were injected into the two input ports, with reflectivity $r^2=0.925$ ($t=\sqrt{1-r^2}$). A single-photon detector in one output arm provided the conditioning: homodyne data in the other arm were acquired only upon a detector click. The resulting quantum interference removes which path information and prepares a nonclassical state in the signal mode. In the ideal limit, the conditional state takes the form
\begin{eqnarray}
\ket{\psi_s}=c_0\ket{0}+c_1\ket{1},\qquad
c_0=\frac{t}{\sqrt{t^2+\alpha^2}},\quad
c_1=\frac{\alpha}{\sqrt{t^2+\alpha^2}},
\end{eqnarray}
which is commonly referred to as a \textit{Schrödinger kitten state}.
The generated states were analysed by balanced homodyne detection with a phase scanned local oscillator, yielding quadrature probability distributions at multiple phases. The experimental data set contains
$T=14153$ homodyne samples
$\{x_{\theta_j},\theta_j\}_{j=1}^{T}$, where $x_{\theta_j}$ is the measured quadrature corresponding to the local-oscillator phase $\theta_j$.
The measurements are folded to $\theta\in[0,\pi)$ using
$x_{\theta+\pi}=-x_\theta$ and grouped into $N_\theta=20$ nearly equally
populated phase bins.
\par To account phenomenologically for optical losses, finite single-photon preparation efficiency and detector dark counts, the reference state is modeled as
\begin{equation}
\label{eq:mixed_kitten_model}
\boldsymbol{\rho}
=
(1-p)|0\rangle\langle0|
+
p|\psi_s\rangle\langle\psi_s|, 
\end{equation}
where for $\alpha=0.3$ one can estimate $p$ from the reported purity to be equal to $p=0.6810463$. The corresponding symplectic tomogram is
\begin{equation}
\label{eq:kitten_tomogram_model}
\mathcal W(x|\mu,\nu)
=
\mathcal W_0(x|\beta)
\left[
1-pc_1^2
+
2pc_1^2\frac{x^2}{\beta^2}
+
2\sqrt{2}\,
pc_0c_1
\frac{\mu x}{\beta^2}
\right],
\qquad
\beta=\sqrt{\mu^2+\nu^2},
\end{equation}
where the tomogram  of the ground state of the harmonic oscillator is given by the formula \eqref{1530}. The CF is
\begin{equation}
\label{eq:kitten_CF_model}
\phi(t;\mu,\nu)
=
\exp\!\left[
-\frac{t^2(\mu^2+\nu^2)}{4}
\right]
\left[
1
-
\frac{
pc_1^2t^2(\mu^2+\nu^2)
}{2}
+
i\sqrt{2}\,
pc_0c_1\mu t
\right].
\end{equation}
The corresponding Wigner function is
\begin{equation}
\label{eq:kitten_Wigner_model}
W(q,p)
=
\frac{e^{-(q^2+p^2)}}{\pi}
\left[
1
-
2pc_1^2
+
2pc_1^2(q^2+p^2)
+
2\sqrt{2}\,
pc_0c_1q
\right].
\end{equation}
The photon-number tomogram of the state is
\begin{align}
\label{eq:kitten_photon_tomogram_model}
\mathsf w_r(\alpha)
&=
\mathsf w_{0r}(\alpha)
\Bigg[
1-p|c_1|^2
+
p|c_1|^2
\frac{(r-|\alpha|^2)^2}{|\alpha|^2}+2p
\frac{r-|\alpha|^2}{|\alpha|^2}
\operatorname{Re}
\left(
c_0c_1^*\alpha
\right)
\Bigg],
\end{align}
where
\begin{equation}
\label{eq:vacuum_photon_tomogram}
\mathsf w_{0r}(\alpha)
=
e^{-|\alpha|^2}
\frac{|\alpha|^{2r}}{r!}, \qquad r=0,1,2,\dots
\end{equation}

We consider two variants of KQSE. The uncorrected reconstruction and the one with the noise correction. The noise correction procedure is described in detail in the Supplemental Material of Ref.~\cite{markovich2025nonparametric}.
Figure~\ref{fig:exp_wigner_all_methods} shows the estimators of the Wigner function with the two KQSE variants, BI, CO and CGAN.
\begin{figure}[h]
\centering
\includegraphics[width=0.8\textwidth]
{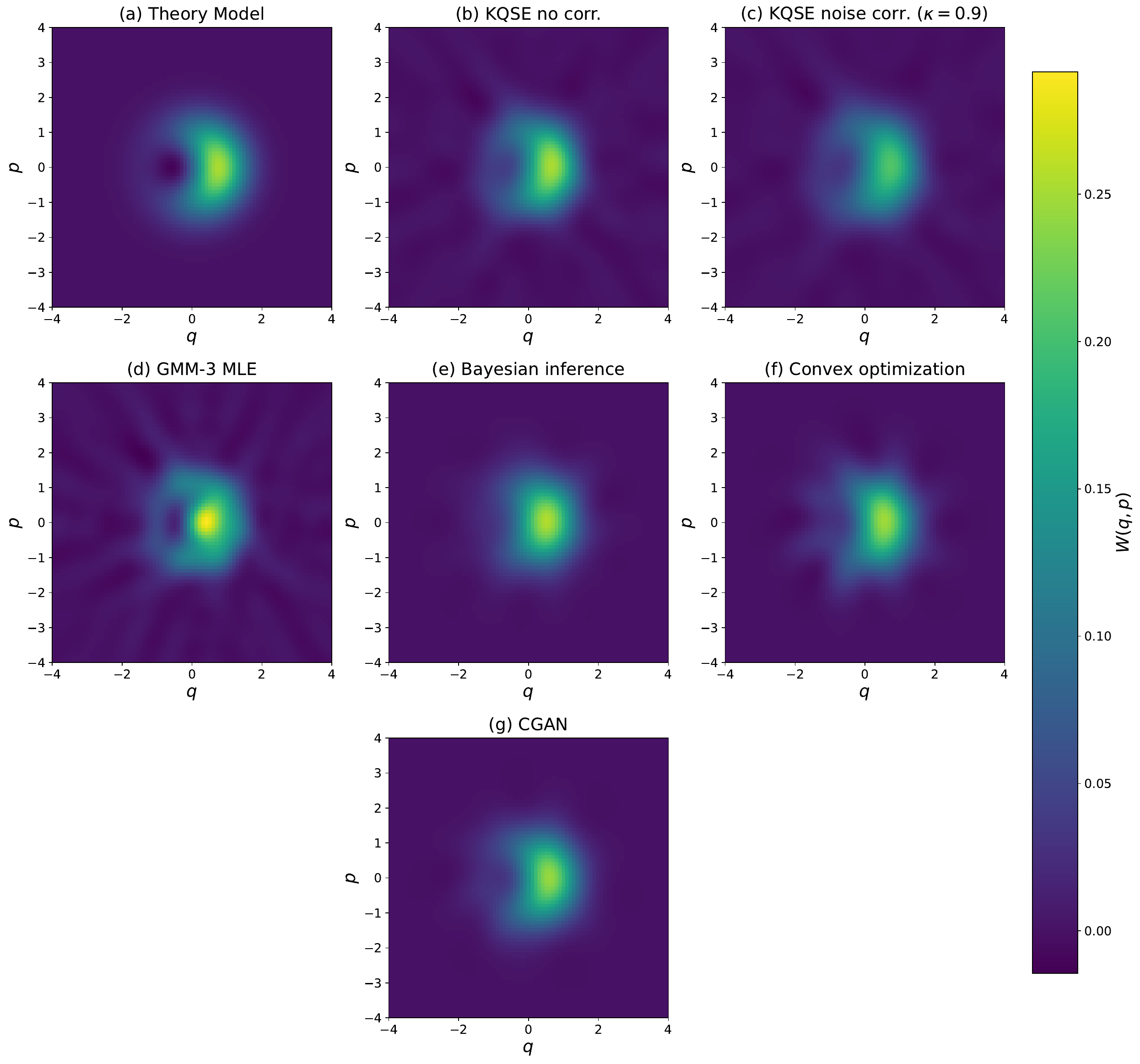}
\caption{The estimates of the Wigner function of the kitten state.
Top row:
(a) theoretical Wigner function,
(b) KQSE without noise correction, and
(c) noise-corrected KQSE.
Middle row:
(d) GMM-3 MLE,
(e) BI,
(f) CO.
Bottom centered panel:
(g) CGAN.
The parameter $N_{cut}=10$ for BI, CO, CGAN.
}
\label{fig:exp_wigner_all_methods}
\end{figure}
The same set of reconstructions is compared for the
photon-number tomograms in
Fig.~\ref{fig:exp_photon_all_methods}.
\begin{figure}[ht]
\centering
\includegraphics[width=0.8\textwidth]
{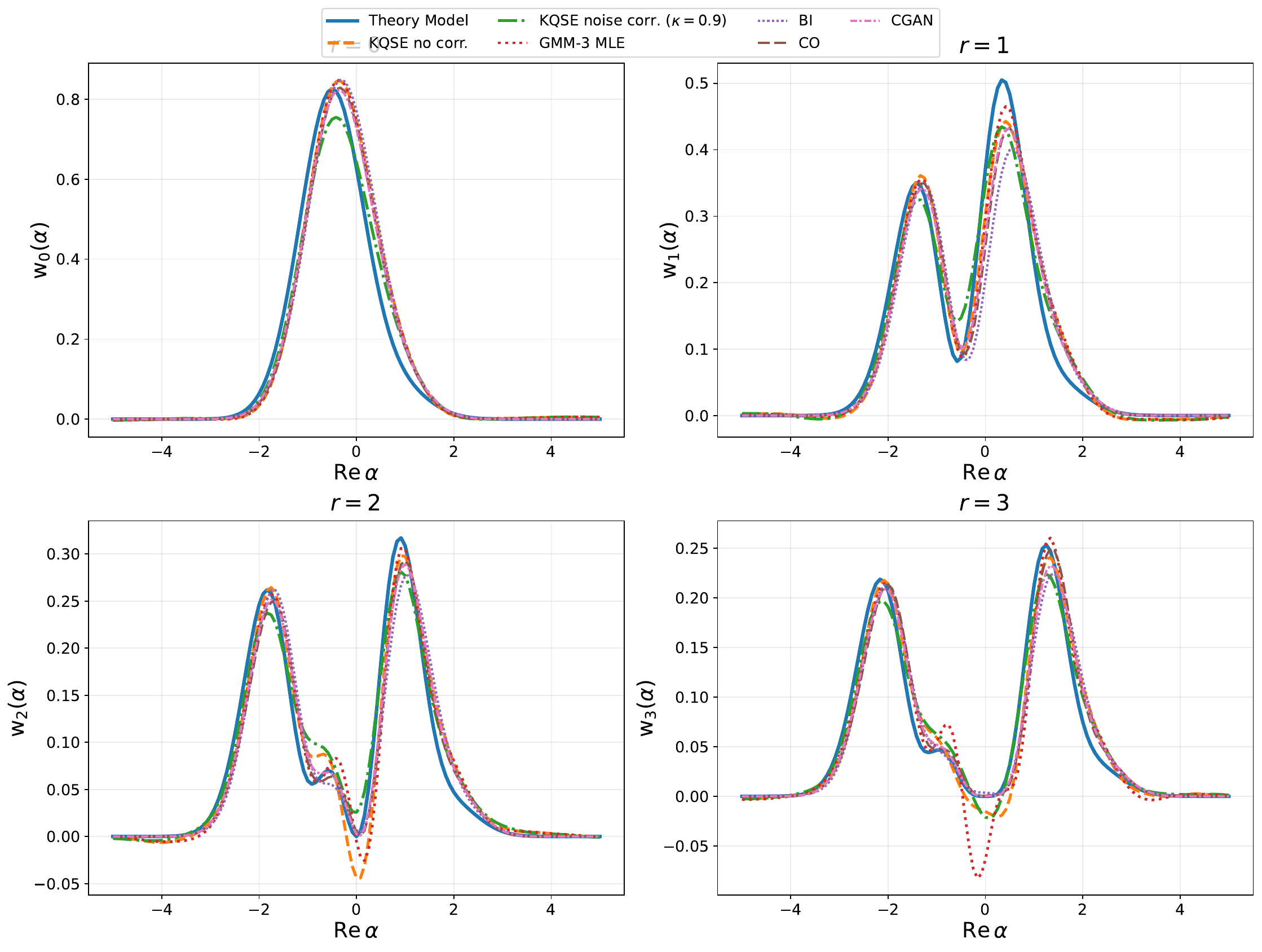}
\caption{
The photon-number tomograms
$\mathsf w_r(\alpha)$, $r=0,1,2,3$, compared to the BI, CO, and CGAN estimators with the Fock space cut-off
$N_{\mathrm{cut}}=10$, whereas KQSE is cut-off free.
}
\label{fig:exp_photon_all_methods}
\end{figure}

For the Wigner function, the experimental model ISE is evaluated as
\begin{equation}
\label{eq:experimental_ISE_W}
\widehat{\operatorname{ISE}}_{W}
=
\Delta q\,\Delta p
\sum_{i,j}
\left|
\widehat W(q_i,p_j)
-
W(q_i,p_j)
\right|^2,
\end{equation}
whereas for the photon-number tomograms
\begin{equation}
\label{eq:experimental_ISE_wr}
\widehat{\operatorname{ISE}}_{\mathsf w_r}
=
\Delta\alpha
\sum_i
\left|
\widehat{\mathsf w}_r(\alpha_i)
-
\mathsf w_r(\alpha_i)
\right|^2.
\end{equation}
The resulting ISE values and squared uniform model discrepancies
 for the two KQSE variants and the four
competing reconstruction methods are collected in
Table~\ref{tab:exp_all_method_comparison}.
\begin{table}[h]
\caption{
Experimental discrepancies relative to the approximate kitten state
reference model for KQSE without noise correction, noise corrected
KQSE, GMM-3 MLE, BI, CO, and CGAN.
BI, CO, and CGAN use the Fock space
cut-off $N_{\mathrm{cut}}=10$ of the original experimental benchmark.
Bold entries denote the smallest value for each quantum-state representations and metric.
}
\label{tab:exp_all_method_comparison}
\centering
\footnotesize
\setlength{\tabcolsep}{3pt}

\begin{tabularx}{\textwidth}{@{}CCCC@{}}
\toprule
\textbf{Function} &
\textbf{Method} &
$\boldsymbol{\widehat{\operatorname{ISE}}}$ &
$\boldsymbol{\widehat E_{\infty,\Gamma}}$ \\
\midrule

\multirow[m]{6}{*}{$W(q,p)$}
& KQSE, no corr. &
$6.862\times10^{-3}$ &
$3.832\times10^{-3}$ \\
& KQSE, noise corr. ($\kappa=0.9$) &
$\mathbf{4.510\times10^{-3}}$ &
$\mathbf{2.546\times10^{-3}}$ \\
& GMM-3 MLE &
$9.338\times10^{-3}$ &
$1.336\times10^{-2}$ \\
& BI &
$1.169\times10^{-2}$ &
$1.109\times10^{-2}$ \\
& CO &
$7.437\times10^{-3}$ &
$5.492\times10^{-3}$ \\
& CGAN &
$6.035\times10^{-3}$ &
$4.608\times10^{-3}$ \\
\midrule

\multirow[m]{6}{*}{$\mathsf w_0(\alpha)$}
& KQSE, no corr. &
$2.655\times10^{-2}$ &
$2.217\times10^{-2}$ \\
& KQSE, noise corr. ($\kappa=0.9$) &
$\mathbf{1.362\times10^{-2}}$ &
$\mathbf{8.469\times10^{-3}}$ \\
& GMM-3 MLE &
$2.468\times10^{-2}$ &
$1.990\times10^{-2}$ \\
& BI &
$3.430\times10^{-2}$ &
$2.986\times10^{-2}$ \\
& CO &
$2.274\times10^{-2}$ &
$1.750\times10^{-2}$ \\
& CGAN &
$2.044\times10^{-2}$ &
$1.638\times10^{-2}$ \\
\midrule

\multirow[m]{6}{*}{$\mathsf w_1(\alpha)$}
& KQSE, no corr. &
$8.713\times10^{-3}$ &
$5.677\times10^{-3}$ \\
& KQSE, noise corr. ($\kappa=0.9$) &
$\mathbf{6.574\times10^{-3}}$ &
$\mathbf{4.953\times10^{-3}}$ \\
& GMM-3 MLE &
$8.860\times10^{-3}$ &
$6.084\times10^{-3}$ \\
& BI &
$2.534\times10^{-2}$ &
$3.029\times10^{-2}$ \\
& CO &
$1.364\times10^{-2}$ &
$1.425\times10^{-2}$ \\
& CGAN &
$1.123\times10^{-2}$ &
$1.148\times10^{-2}$ \\
\midrule

\multirow[m]{6}{*}{$\mathsf w_2(\alpha)$}
& KQSE, no corr. &
$4.590\times10^{-3}$ &
$3.093\times10^{-3}$ \\
& KQSE, noise corr. ($\kappa=0.9$) &
$\mathbf{2.937\times10^{-3}}$ &
$\mathbf{2.056\times10^{-3}}$ \\
& GMM-3 MLE &
$5.153\times10^{-3}$ &
$6.271\times10^{-3}$ \\
& BI &
$6.990\times10^{-3}$ &
$6.330\times10^{-3}$ \\
& CO &
$4.114\times10^{-3}$ &
$3.547\times10^{-3}$ \\
& CGAN &
$4.051\times10^{-3}$ &
$3.367\times10^{-3}$ \\
\midrule

\multirow[m]{6}{*}{$\mathsf w_3(\alpha)$}
& KQSE, no corr. &
$3.066\times10^{-3}$ &
$1.674\times10^{-3}$ \\
& KQSE, noise corr. ($\kappa=0.9$) &
$\mathbf{1.869\times10^{-3}}$ &
$\mathbf{1.037\times10^{-3}}$ \\
& GMM-3 MLE &
$6.046\times10^{-3}$ &
$6.842\times10^{-3}$ \\
& BI &
$4.456\times10^{-3}$ &
$3.759\times10^{-3}$ \\
& CO &
$3.390\times10^{-3}$ &
$2.345\times10^{-3}$ \\
& CGAN &
$2.927\times10^{-3}$ &
$2.313\times10^{-3}$ \\
\bottomrule
\end{tabularx}
\end{table}
This result should be interpreted as evidence that the fixed
noise correction model improves agreement with the experimental
kitten-state benchmark.

\section{Discussion and Conclusions}
\label{sec:conclusions}

In this work, we extended Kernel Quantum State Estimation to a general class of kernel integral transforms of the tomographic characteristic function, covering all phase-space quasiprobability distributions, e.g., the Wigner and Husimi functions, as well as photon-number tomograms. We prove that the total reconstruction error of the KQSE for this integral transformation achieves nearly optimal uniform pointwise MSE scaling $\widetilde O(T^{-1})$, where $T$ is the total amount of homodyne/heterodyne measurements. 

The numerical results support this results. In the simulation study, KQSE avoided the model bias saturation observed for the MLE baseline, while the experimental study demonstrated that the same estimated characteristic function can be used to reconstruct several phase-space functionals directly from homodyne data. These results suggest that the principal advantage of KQSE on experimental data is not necessarily a universal reduction of every finite sample error, but rather its model independent applicability to several state representations within one reconstruction procedure. 

Further developments should include uncertainty quantification, detector inefficiency and other measurement imperfections, and physicality preserving corrections that retain the nonparametric character of the method. Extensions to multimode continuous variable systems and adaptive allocation of measurements among tomographic directions are particularly important, although they will require controlling the growth of both the integration dimension and the measurement budget. More broadly, the kernel transform viewpoint developed here suggests that a single nonparametric estimate of the tomographic characteristic function can serve as a reusable statistical representation of the quantum state from which a wide family of experimentally relevant quantities can be evaluated with explicit error control.

\authorcontributions{Conceptualization, V.A. and L.A.; methodology, V.A. and L.A.; software, V.A.; validation, V.A. and L.A.; formal analysis, V.A. and L.A.; investigation, V.A. and L.A.; resources,V.A. and L.A.; data curation, V.A. and L.A.; writing---original draft preparation, V.A. and L.A.; writing---review and editing, V.A. and L.A.; visualization, V.A.; supervision, L.A.; project administration, L.A.; funding acquisition, L.A. All authors have read and agreed to the published version of the manuscript.}

\funding{L.M. was  supported by the Netherlands Organisation for Scientific Research (NWO/OCW), as part of the Quantum Software Consortium program (project number 024.003.037 / 3368).}

\institutionalreview{Not applicable.}
\informedconsent{Not applicable.}

\dataavailability{We encourage all authors of articles published in MDPI journals to share their research data. In this section, please provide details regarding where data supporting reported results can be found, including links to publicly archived datasets analyzed or generated during the study. Where no new data were created, or where data is unavailable due to privacy or ethical restrictions, a statement is still required. Suggested Data Availability Statements are available in section ``MDPI Research Data Policies'' at \url{https://www.mdpi.com/ethics}.}

\acknowledgments{We thank A. Lvovsky for providing the experimental homodyne-tomographic data used in this work and for valuable discussions.}

\conflictsofinterest{The authors declare no conflicts of interest.} 


\reftitle{References}

\externalbibliography{yes}
\bibliography{tomogram}

\PublishersNote{}
\isPreprints{}{
\end{adjustwidth}
} 
\end{document}